\newtheorem{definition}{Definition}
\newtheorem{theorem}{Theorem}
\def\BibTeX{{\rm B\kern-.05em{\sc i\kern-.025em b}\kern-.08em
    T\kern-.1667em\lower.7ex\hbox{E}\kern-.125emX}}
\newcommand{\pre}[1]{\ensuremath{{^\bullet#1}}\xspace}
\newcommand{\post}[1]{\ensuremath{{#1^\bullet}}\xspace}
\newcommand{\preregion}[1]{\ensuremath{{^\circ#1}}\xspace}
\newcommand{\postregion}[1]{\ensuremath{{#1^\circ}}\xspace}
\newcommand{\rgraph}[1]{\ensuremath{{\textit{RG}(#1)}}\xspace}
\newcommand{\es}[1]{\ensuremath{{\textit{ES}(#1)}}\xspace}
\newcommand{\mis}[1]{\ensuremath{{\textit{MIS}(#1)}}\xspace}
\newcommand{\fire}[1]{\ensuremath{{[#1\rangle}}\xspace}
\newcommand{\arc}[3]{\ensuremath{#1\stackrel{#2}{\rightarrow}#3\xspace}}
\newcommand{\pn}[0]{\ensuremath{\textit{PN}}}
\newcommand{\sm}[0]{\ensuremath{\textit{SM}}}
\begin{document}

\title{Decomposition of transition systems into sets of synchronizing state
machines
}

\author{\IEEEauthorblockN{Viktor Teren}
\IEEEauthorblockA{\textit{Department of Computer Science} \\
\textit{Università degli Studi di Verona}\\
Verona, Italy \\
viktor.teren@univr.it}
\and
\IEEEauthorblockN{Jordi Cortadella}
\IEEEauthorblockA{\textit{Department of Computer Science} \\
\textit{Universitat Politècnica de Catalunya}\\
Barcelona, Spain \\
jordi.cortadella@upc.edu}
\and
\IEEEauthorblockN{Tiziano Villa}
\IEEEauthorblockA{\textit{Department of Computer Science} \\
\textit{Università degli Studi di Verona}\\
Verona, Italy \\
tiziano.villa@univr.it}
}

\maketitle

\thispagestyle{plain}
\pagestyle{plain}

\begin{abstract}
Transition systems (TS) and Petri nets (PN) are important models of
computation
ubiquitous in formal methods for modeling systems. An important problem
is how
to extract from a given TS a PN whose reachability graph is equivalent
(with a suitable notion of equivalence) to the original TS.

This paper addresses the decomposition of transition systems into
synchronizing
state machines (SMs), which are a class of Petri nets where each transition
has one incoming and one outgoing arc and all markings have exactly one
token.
This is an important case of the general problem of extracting a PN from
a TS.
The decomposition is based on the theory of regions, and it is shown that
a property of regions called excitation-closure is a sufficient condition
to guarantee the equivalence between the original TS and a decomposition
into SMs.

An efficient algorithm is provided which solves the problem by reducing
its critical steps to the maximal independent set problem (to compute
a minimal  set of irredundant SMs) or to satisfiability (to merge the SMs).
We report experimental results that show a good trade-off between quality
of results vs. computation time.
\end{abstract}

\begin{IEEEkeywords}
Transition system, Petri net, state machine, decomposition, theory of regions, SAT, pseudo-Boolean optimization.
\end{IEEEkeywords}

\section{Introduction}
\label{section:introduction}

The decomposition of a transition system (TS) into a synchronous product
of state machines (SMs, Petri nets with exactly one incoming and outgoing edge for every transition) gives an intermediate model between a TS and a Petri net (PN).
The set of SMs may exhibit fewer distributed states and transitions, exploiting the best of both worlds of TSs and PNs, leading to better implementations (e.g., smaller circuits with less power consumption). Furthermore, the decomposition procedure extracts explicitly the system concurrency (a PN feature), which is convenient for system analysis and performance improvement
(see an example in Fig.~\ref{fig:first_example}).

The decomposition of a transition system can be seen from the Petri net perspective as the problem of the coverability by S-components of a Petri net \cite{Kemper1992AnEP,desel1995free,logic_synthesis_2013} or of a connected subnet system\cite[p.~49]{badouel2015petri} (called S-coverability): each S-component is a strongly connected safe SM i.e., SM with only one token, therefore it cannot contain concurrency. 
The only concurrency of the system is featured in the interaction of the S-components. In our paper we present how the theory of regions~\cite{regions} can be used to design a similar procedure starting from a transition system and creating a set of interacting SMs,  
but without building an equivalent Petri net. 
Our approach 
computes a set of minimal regions with the excitation-closure (EC) property of a given TS, and derives from them an irredundant synchronous product 
of interacting SMs. 
Excitation closure guarantees that the regions extracted from the transition system are sufficient to model its behaviour.

\begin{wrapfigure}{r}{0.42\linewidth}
    \vspace{-0.3cm}
    \hspace{-0.5cm}
    \begin{subfigure}{\linewidth}
    \scalebox{0.8}{
    \begin{tikzpicture}[->,>=stealth',shorten >=1pt,auto,node distance=1.2cm,
		semithick,initial text={}]
		\tikzstyle{every state}=[]
		\node [initial,state,inner sep=1pt,minimum size=0pt] (s0) {$s_0$};
		\node [state,inner sep=1pt,minimum size=0pt] (s16) [right of=s0] {$s_{16}$};
		\node [state,inner sep=1pt,minimum size=0pt] (s17) [below of=s16] {$s_{17}$};
		\node [state,inner sep=1pt,minimum size=0pt] (s1) [right of=s16] {$s_1$};
		\node [state,inner sep=1pt,minimum size=0pt] (s19) [below  of=s17] {$s_{19}$};
		\node [state,inner sep=1pt,minimum size=0pt] (s2) [below of=s1] {$s_2$};
		\node [state,inner sep=1pt,minimum size=0pt] (s4) [right of=s1] {$s_4$};
		\node [state,inner sep=1pt,minimum size=0pt] (s3) [below  of=s2] {$s_3$};
		\node [state,inner sep=1pt,minimum size=0pt] (s5) [below  of=s4] {$s_5$};
		\node [state,inner sep=1pt,minimum size=0pt] (s6) [below of=s5] {$s_6$};
		\node [state,inner sep=1pt,minimum size=0pt] (s18) [below left of=s6] {$s_{18}$};
		\node [state,inner sep=1pt,minimum size=0pt] (s7) [below right of=s18] {$s_7$};
		\node [state,inner sep=1pt,minimum size=0pt] (s9) [below  of=s7] {$s_9$};
		\node [state,inner sep=1pt,minimum size=0pt] (s10) [left of=s7] {$s_{10}$};
		\node [state,inner sep=1pt,minimum size=0pt] (s8) [below  of=s9] {$s_8$};
		\node [state,inner sep=1pt,minimum size=0pt] (s12) [below  of=s10] {$s_{12}$};
		\node [state,inner sep=1pt,minimum size=0pt] (s13) [ left of=s10] {$s_{13}$};
		\node [state,inner sep=1pt,minimum size=0pt] (s11) [below  of=s12] {$s_{11}$};
		\node [state,inner sep=1pt,minimum size=0pt] (s15) [below  of=s13] {$s_{15}$};
		\node [state,inner sep=1pt,minimum size=0pt] (s14) [below  of=s15] {$s_{14}$};
		
		\path (s0) edge node {$b-$} (s16)
		        (s16) edge node {$s+$} (s17)
		        (s16) edge node {$r-$} (s1)
		        (s17) edge node {$b+$} (s19)
		        (s17) edge node {$r-$} (s2)
		        (s1) edge node {$s+$} (s2)
		        (s1) edge node {$a+$} (s4)
		        (s19) edge node {$r-$} (s3)
		        (s2) edge node {$b+$} (s3)
		        (s2) edge node {$a+$} (s5)
		        (s4) edge node {$s+$} (s5)
		        (s3) edge node {$a+$} (s6)
		        (s5) edge node {$b+$} (s6)
		        (s6) edge node {$s-$} (s18)
		        (s18) edge node {$b-$} (s7)
		        (s7) edge node {$s+$} (s9)
		        (s7) edge node {$r+$} (s10)
		        (s9) edge node {$b+$} (s8)
		        (s9) edge node {$r+$} (s12)
		        (s10) edge node {$s+$} (s12)
		        (s10) edge node {$a-$} (s13)
		        (s8) edge node {$r+$} (s11)
		        (s12) edge node {$b+$} (s11)
		        (s12) edge node {$a-$} (s15)
		        (s13) edge node {$s+$} (s15)
		        (s11) edge node {$a-$} (s14)
		        (s15) edge node {$b+$} (s14)
		    (s14) edge [bend left=10] node {$s-$} (s0)
		;
\end{tikzpicture}
}\caption{} \label{fig:first_example_a}
\end{subfigure}

\begin{subfigure}{\linewidth}
\hspace{-0.5cm}
\scalebox{0.8}{
    \begin{tikzpicture}[->,>=stealth',shorten >=1pt,auto,node distance=1.2cm,
		semithick,initial text={}]
		\tikzstyle{every state}=[]
		
		\node [initial,state,inner sep=1pt,minimum size=0pt] (r3) {$r_3$};
		\node [state,inner sep=1pt,minimum size=0pt] (r1) [below of=r3] {$r_1$};
		\node [state,inner sep=1pt,minimum size=0pt] (r2) [right of=r1] {$r_2$};
		\node [state,inner sep=1pt,minimum size=0pt] (r0) [above of=r2] {$r_0$};
	
	    \path (r3) edge node {$b-$} (r1)
	        (r1) edge node {$s+$} (r2)
	        (r2) edge node {$b+$} (r0)
	        (r0) edge node {$s-$} (r3)
	    ;
	    
	\end{tikzpicture}
	\begin{tikzpicture}[->,>=stealth',shorten >=1pt,auto,node distance=1.2cm,
		semithick,initial text={}]
		\tikzstyle{every state}=[]

		\node [initial,state,inner sep=1pt,minimum size=0pt] (r3) {$r_3$};
		\node [state,inner sep=1pt,minimum size=0pt] (r0) [below  of=r3] {$r_0$};
		\node [state,inner sep=1pt,minimum size=0pt] (r2) [right of=r0] {$r_2$};
		\node [state,inner sep=1pt,minimum size=0pt] (r1) [above of=r2] {$r_1$};
	
	    \path (r3) edge node {$r-$} (r0)
	        (r0) edge node {$a+$} (r2)
	        (r2) edge node {$r+$} (r1)
	        (r1) edge node {$a-$} (r3)
	    ;
		
		\end{tikzpicture}}
		
		\vspace{-0.3cm}
		\scalebox{0.8}{
	\centering
	\begin{tikzpicture}[->,>=stealth',shorten >=1pt,auto,node distance=1.2cm,
		semithick,initial text={}]
		\tikzstyle{every state}=[]
		\node [initial,state,inner sep=1pt,minimum size=0pt] (r4) {$r_4$};
		\node [state,inner sep=1pt,minimum size=0pt] (r3) [right of=r4] {$r_3$};
		\node [state,inner sep=1pt,minimum size=0pt] (r2) [right of=r3] {$r_2$};
		\node [state,inner sep=1pt,minimum size=0pt] (r0) [above of=r3] {$r_0$};
		\node [state,inner sep=1pt,minimum size=0pt] (r1) [above of=r2] {$r_1$};
	
	   \path (r4) edge node {$b-$} (r3)
	        (r3) edge node {$r+$} (r2)
	        (r3) edge node {$r-$} (r0)
	        (r2) edge node {$a-$} (r1)
	        (r0) edge node {$a+$} (r1)
	        (r1) edge [bend right=99, above] node {$s-$} (r4)
	    ;
		\end{tikzpicture}
	}

\caption{} \label{fig:first_example_b}
\end{subfigure}
\caption{TS derived from an STG\protect\footnotemark (\ref{fig:first_example_a}) and the derived set of synchronizing state machines (\ref{fig:first_example_b}).}
\label{fig:first_example}
\vspace{-0.3cm}
\end{wrapfigure}
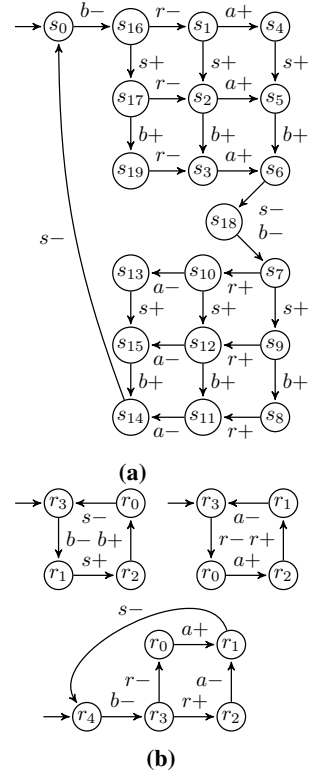

\footnotetext{
A Signal Transition Graph (\textbf{STG}) $G = (V, E)$ is an interpreted subset of marked graphs wherein each transition represents either the rising ($x^+$) or falling ($x^-$) of a signal $x$ which has signal levels high and low. $V$ is the set of transitions and $E$ is the set of edges corresponding to places of the underlying marked graph.
}

The main steps of the decomposition procedure are:
1) computation of all minimal regions of the given TS, 2) generation of a set of SMs with the excitation-closure property, 3) removal of redundant SMs, 4) merging of regions while preserving the excitation-closure property.
The generation of minimal regions is well-known from literature~\cite{regions}.
The generation of SMs with the EC property is reduced to solving instances of maximal independent set, where each solution of MIS yields an SM.
Some of these SMs may be completely redundant, i.e., they can be removed while the remaining partially redundant SMs still satisfy the EC property.
We use a greedy strategy to find a minimal irredundant set of SMs.
The surviving SMs go through a simplification step that merges adjacent regions and removes the edges/labels captured by the merging step. In the extreme case, one can remove all instances of a region except for one SM. The best merging option is selected by encoding both the constraints of the merging operations and the optimization objective as an ILP solvable by SAT and binary search~\cite{boros2002pseudo}, with the goal of keeping the minimum number of labels needed to satisfy the EC property. At the end, the SMs are optimized according to the selected merging operations.

The optimization steps in which the problem is divided may be solved exactly or with heuristics. Experiments have been performed trying various combinations of exact and heuristic algorithms, with the conclusion that the heuristics deliver good results in reasonable computation time.

\subsection{Previous and related work}

In~\cite{kalenkova2014process}, a transition system is decomposed iteratively into an interconnection of $n$ component transitions systems with the objective to extract a Petri net from them. This can be seen as a special case of our problem, because in~\cite{kalenkova2014process} the decomposition allows the extraction of a Petri net, but the decomposed set of transition systems cannot be used as an intermediate model. Their approach is flexible in choosing how to split the original transition system, but it does not provide any minimization algorithm, so that the redundancy due to overlapping states in the component transition systems translates into redundant places of the final Petri net. 
Another method presented in~\cite{de2016mining} is based on the decomposition of transition systems into \textit{``slices"}, where each transition system is separately synthesized into a Petri net, and in case of Petri nets ``hard" to understand the process can be recursively repeated on one or more \textit{``slices"} creating a higher number of smaller PNs. With respect to the aforementioned methods, our approach yields by construction a set of PNs restricted to only SMs and applies to them minimization criteria.

Decomposition plays an important role in process mining~\cite{van2012decomposing, van2013decomposing, verbeek2014decomposed, taibi2019monolithic}, where in most cases the decomposition starts from a Petri net representing the whole behaviour of the system~\cite{van2012decomposing, van2013decomposing, verbeek2014decomposed}.
Instead of creating a PN from event logs we can easily create a transition system~\cite{van2010process, carmona2009divide} and directly decompose it with our algorithm. 

This paper is organized as follows.
Sec.~\ref{section:preliminaries} introduces the background material (including the theory of regions to extract PNs from TSs) and then characterizes the extraction of SMs from TSs. 
Sec.~\ref{section:relations} discusses composition of SMs and contains the main theoretical result that the synchronous product of SMs is bisimilar to the original transition system (proof in the appendix). The procedures to extract the SMs are described in Sec.~\ref{section:decomposition} and exhaustive experiments are reported in Sec.~\ref{section:experiments}, with final conclusions drawn in Sec.~\ref{section:conclusions}.

\section{Preliminaries} \label{section:preliminaries}

\subsection{Transition systems}

\begin{definition}[TS/LTS] \label{def:ts} \cite{regions} A Labeled Transition System (LTS, or simply TS) 
is defined as a 4-tuple ($S$, $E$, $T$, $s_0$) where:

\begin{itemize}
    \item $S$ is a non-empty set of states
    \item $E$ is a set of events/labels
    \item $T \subseteq S \times E \times S $ is a transition relation
    \item $s_{0} \in S$ is an initial state
\end{itemize}
\end{definition}

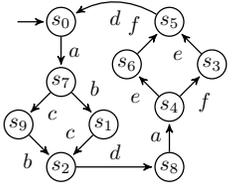
\begin{wrapfigure}{r}{0.33\linewidth}
	\centering
	\scalebox{0.8}{
	\begin{tikzpicture}[->,>=stealth',shorten >=1pt,auto,node distance=1cm,semithick, initial text={}]
	\tikzstyle{every state}=[]
	
	\node [initial,state,inner sep=1pt,minimum size=0pt] (s0) {$s_0$};
	\node [state,inner sep=1pt,minimum size=0pt] (s7) [below of=s0] {$s_7$};
	\node [state,inner sep=1pt,minimum size=0pt] (s9)  [below left of=s7] {$s_9$};
	\node [state,inner sep=1pt,minimum size=0pt] (s1)  [below right of=s7] {$s_1$};
	\node [state,inner sep=1pt,minimum size=0pt] (s2) [below right of=s9] {$s_2$};
	\node [state,inner sep=1pt,minimum size=0pt] (s8) [right of=s2, xshift=0.8cm] {$s_8$};
	\node [state,inner sep=1pt,minimum size=0pt] (s4) [above of=s8] {$s_4$};
	\node [state,inner sep=1pt,minimum size=0pt] (s3) [above right of=s4] {$s_3$};
	\node [state,inner sep=1pt,minimum size=0pt] (s6) [above left of=s4] {$s_6$};
	\node [state,inner sep=1pt,minimum size=0pt] (s5) [above left of=s3] {$s_5$};
	
	\path (s0) edge node {$a$} (s7)
	(s7) edge node {$b$} (s1)
	(s7) edge node {$c$} (s9)
	(s9) edge [below left] node {$b$} (s2)
	(s1) edge [above left] node {$c$} (s2)
	(s2) edge node {$d$} (s8)
	(s8) edge node {$a$} (s4)
	(s4) edge [below right] node {$f$} (s3)
	(s4) edge node {$e$} (s6)
	(s3) edge node {$e$} (s5)
	(s6) edge [above left] node {$f$} (s5)
	(s5) edge [bend right=30] node {$d$} (s0)
	;
	\end{tikzpicture}}
	\caption{Example of transition system.}
	\label{fig:TS}
	\vspace{-0.7cm}
\end{wrapfigure}

Every transition system is supposed to satisfy the following properties:
\begin{itemize}
    \item It does not contain-self loops: $\forall (s, e, s') \in T: s \neq s'$;
    \item Each event has at least one occurrence: 
     $\forall e \in E: \exists(s,e,s') \in T$;
    \item Every state is reachable from the initial state: \mbox{$\forall s \in S: s_{0} \rightarrow^{*} s$;}
\end{itemize}

\begin{itemize}
    \item It is deterministic: 
    for each state there is at most one successor state reachable with label $e$.
\end{itemize}

An example of a transition system can be seen in Fig.~\ref{fig:TS}.

\begin{definition}[Isomorphism] \label{def:isomorphism}
Two transition systems \mbox{$\textit{TS}_1 = (S_1, E, T_1, s_{0,1})$} and 
\mbox{$\textit{TS}_2 = (S_2, E, T_2, s_{0,2})$} are said to be isomorphic (or that there is an isomorphism between $\textit{TS}_1$ and $\textit{TS}_2$) if there is a bijection \mbox{$b_S: S_1 \rightarrow S_2$}, such that:
\begin{itemize}
    \item $b_S(s_{0,1}) = s_{0,2}$
    \item $\forall (s,e,s') \in T_1:~ (b_S(s),e,b_S(s'))\in T_2$
    \item $\forall (s,e,s') \in T_2:~ (b_S^{-1}(s),e,b_S^{-1}(s')) \in T_1$.
\end{itemize}
\end{definition}

\begin{definition}[Bisimulation] \label{def:bisimulation}
Given two transition systems \mbox{$\textit{TS}_1 = (S_1, E, T_1,$} $s_{0,1})$ and \mbox{$\textit{TS}_2 = (S_2, E, T_2, s_{0,2})$}, a binary relation $B \subseteq S_1 \times S_2$ is a bisimulation, denoted by \mbox{$\textit{TS}_1 \sim_B \textit{TS}_2$,} if $(s_{0,1}, s_{0,2}) \in B$ and if whenever $(p, q) \in B$ with $p \in S_1$ and $q \in S_2$:
\begin{itemize}
    \item \mbox{$\forall (p, e, p') \in T_1:~\exists q' \in S_2$} such that
    \mbox{$(q, e, q') \in T_2$} and \mbox{$(p', q') \in B$}
    \item \mbox{$\forall (q, e, q') \in T_2:~ \exists p' \in S_1$} such that
    \mbox{$(p, e, p') \in T_1$} and \mbox{$(p',q') \in B$}.
\end{itemize}
Two TSs are said to be bisimilar if there is a bisimulation between them.
\end{definition}

The operation $\textit{Ac}$ deletes from a TS all the states that are not reachable or accessible from the initial state and all transitions attached to them.

\begin{definition}[Synchronous product] \label{def:parallel_composition}
Given two transition systems \mbox{$\textit{TS}_1 = (S_1,$} $E_1, T_1, s_{0,1})$ and $\textit{TS}_2 = (S_2, E_2, T_2, s_{0,2})$, the synchronous product 
is defined as \mbox{$\textit{TS}_1 || \textit{TS}_2 = \textit{Ac}(S, E_1 \cup E_2, T, (s_{0,1}, s_{0,2}))$} where $S \subseteq S_1 \times S_2$, $(s_{0,1}, s_{0,2}) \in S$, $T \subseteq (S_1 \times S_2) \times E \times (S_1 \times S_2)$ is defined as follows:

\begin{itemize}
	\item  if $a \in E_1 \cap E_2$, 
			$(s_1, a, s_1') \in T_1$ and $(s_2,a,s_2') \in T_2$ then \mbox{$((s_1,s_2), a, (s_1',s_2')) \in T$,}
	\item  if $a \in E_1$, $a \notin E_2$ and 
			 $(s_1, a, s_1') \in T_1$ then $((s_1,s_2), a, (s_1',s_2)) \in T$,
	\item  if $a \notin E_1$, $a \in E_2$ and 
			 $(s_2, a, s_2') \in T_2$ then $((s_1,s_2), a, (s_1,s_2')) \in T$,
    \item nothing else belongs to $T$.
\end{itemize}
\end{definition}
The synchronous product is associative, so we can define the product of a collection of $n$ TSs: $\textit{TS}_1 || \textit{TS}_2 || \dots || \textit{TS}_n = ((\textit{TS}_1 || \textit{TS}_2) \dots) || \textit{TS}_n$; as an alternative, we can extend directly the previous definition to more than two TSs.

\subsection{Petri Nets}

We assume the reader to be familiar with Petri nets. We refer to~\cite{murata1989petri} for a deeper insight on the concepts used in this work. This section introduces the nomenclature related to Petri nets used along the paper.

In this work we will only deal with safe Petri nets, i.e., nets whose places do not contain more than one token in any reachable marking. For this reason, we will model markings as sets of places.

\begin{definition}[Ordinary Petri Net] \label{def:PN} \cite{murata1989petri}
	An ordinary Petri net is a 4-tuple, \hbox{$\pn=(P, T,$} $F, M_0)$ where:
	\begin{itemize}
		\item $P = \{p_1, p_2, ..., p_m\}$ is a finite set of places,
		\item $T = \{t_1, t_2, ..., t_n\}$ is a finite set of transitions,
		\item $F \subseteq (P \times T) \cup (T \times P)$ is a set of arcs (flow relation),
		\item $M_0$ is an initial marking,
		\item $P \cap T = \emptyset$ and $P \cup T \neq \emptyset$.
	\end{itemize}
	A Petri net structure $N=(P,T,F)$ without any specific initial marking is denoted by $N$.
	A Petri net with an initial marking $M_0$ is denoted by $(N, M_0)$.
	
	For any \mbox{$x\in P\cup T$}, then \mbox{$\pre{x}=\{ y | (y,x)\in F\}$}. Similarly, \mbox{$\post{x}=\{ y \mid (x,y)\in F\}$}.
\end{definition}

\begin{definition}[Reachability graph] \label{def:reachability_graph} \cite[p. 20]{badouel2015petri}    
Given a safe Petri net \mbox{$N = (P,$} $T, F, M_0)$, the reachability graph of $N$ is the transition system 
\mbox{$\rgraph{N} = ([M_0\rangle, T,$} $\Delta, M_0)$ defined by 
\mbox{$(M, t, M') \in \Delta$} if $M \in [M_0\rangle$ and \mbox{$M[t\rangle M'$}.
\end{definition}

\begin{definition}[State Machine, SM] \label{def:SM} \cite{murata1989petri}
	A state machine is an ordinary Petri net $N = (P, T, F, M_0)$ such that for every transition $t \in T$, $|\pre{t}|=|\post{t}|=1$, i.e., it has exactly one incoming and one outgoing edge. In a safe State Machine it also holds that $|M_0| = 1$.
\end{definition}

It has been observed in \cite[p. 49]{badouel2015petri} that a state machine \mbox{$M = (P, T, F, M_0)$} can be interpreted as a transition system \mbox{$\textit{TS} = (P, T, \Delta, s_0)$}, where the places correspond to the states,
the transitions to the events, $s_0$ corresponds to the unique marked initial place, and $(p,t,p') \in \Delta$ iff $\pre{t} = \{p\}$ and $\post{t} = \{p'\}$ (in a SM by definition $|\pre{t}|=|\post{t}|=1$).
Therefore the reachability graph of $M$ is isomorphic to the transition system $\textit{TS}$, i.e., $\rgraph{M}$ is isomorphic to $\textit{TS}$.

In this paper we consider sets of synchronizing SMs.

\subsection{From LTS to Petri nets by regions}

In this paper we propose a procedure for the decomposition of Transition Systems based on the theory of regions (from \cite{regions}).
A region is a subset of states in which all the transitions under the same event have the same relation with the region: either all entering, or all exiting, or some completely inside and some completely outside the region.
\begin{definition}[Region] \label{def:reg}
	Given a $\textit{TS} = (S, E,$ $T, s_{0})$, a region is defined as a non-empty set of states  $r \subsetneq S$ such that the following properties hold for each event $e \in E$: 
\vspace{-0.2cm}
\begin{eqnarray*}
enter(e,r) & \implies & \neg in(e,r) \land \neg out(e,r) \land \neg exit(e,r) \\
exit(e,r) & \implies & \neg in(e,r) \land \neg out(e,r) \land \neg enter(e,r)
\end{eqnarray*}

\vspace{-0.2cm}
\textit{where}
\vspace{-0.1cm}
\[
\begin{array}{l}
\left.
\begin{array}{rcl}
in(e,r) & \equiv & \exists (s,e,s') \in T : s,s' \in r \\
~~out(e,r) & \equiv & \exists (s,e,s') \in T : s,s' \notin r
\end{array}
\quad \right\} \textit{no\_cross} \\
\begin{array}{rcl}
enter(e,r) & \equiv & \exists (s,e,s') \in T : s \notin r \land s' \in r \\	
exit(e,r) & \equiv&  \exists (s,e,s') \in T : s \in r \land s' \notin r
\end{array}
\end{array}
\]
\end{definition}
\begin{definition}[Minimal region] \label{def:minimal_region}
	A region $r$ is called \textit{minimal} if there is no other region $r'$ strictly contained in $r$ ($\nexists r' \mid r' \subset r$).
\end{definition}

\begin{definition}[Pre-region (Post-region)] \label{def:pre/post_region}
	A region $r$ is a pre-region (post-region) of an event $e$ if there is a transition labeled with $e$ which exits from $r$ (enters into $r$). The set of all pre-regions (post-regions) of the event $e$ is denoted by $\preregion{e}$ ($\postregion{e}$).
\end{definition}

By definition if $r \in \preregion{e}$ ($r \in \postregion{e}$) all the transitions labeled with $e$ are exiting from $r$ (entering into $r$), furthermore, if the transition system is strongly connected all the regions are also pre-regions of some event.

\begin{definition}[Excitation set / Switching set] \label{def:ES} \label{def:SS}
	The \emph{excitation (switching) set} of event $e$, $\es{e}$ ($\textit{SS}(e)$), is the maximal set of states such that for every $s \in \es{e}$ ($s \in \textit{SS}(e)$) there is a transition  $\arc{s}{e}{}$ ($\arc{}{e}{s}$).
\end{definition}

\begin{definition}[Excitation-closed Transition System (ECTS)] \label{def:excitation_closure}
	A TS with the set of labels $E$ and the pre-regions $\preregion{e}$ is an ECTS if the following conditions are satisfied:
	\begin{itemize}
		\item Excitation closure: $\forall e \in E: \bigcap_{r \in \preregion{e}} r=\es{e}$
		\item Event effectiveness: $\forall e \in E:$ $\preregion{e} \neq \emptyset$
	\end{itemize}
\end{definition}

If the initial TS does not satisfy the excitation closure (EC) or event effectiveness property, \textit{label splitting}\cite{regions} can be performed to obtain an ECTS.

The EC property also ensures that if two states, $s_1$ and $s_2$ cannot be \emph{separated} by any region, i.e., there is no minimal region $r$ such that $s_1\in r$ and $s_2 \not\in r$, then $s_1$ and $s_2$ are bisimilar.

The synthesis of a Petri net from an ECTS, proposed in~\cite{regions}, can be summarized by the following steps:

\begin{enumerate}
\item \emph{Generation of all minimal regions}.\\
All the excitation sets are expanded until they become
regions, i.e. all events satisfy one of the {\it enter/exit/no\_cross} conditions with respect to the regions. The non-minimal regions can be removed by comparing them to the other regions.

\item \emph{Removal of redundant regions}.\\
Some minimal regions may be redundant, meaning that they can be removed while the 
excitation-closure property still holds.

\item \emph{Merging minimal regions}.\\
In order to obtain a place-minimal PN, subsets of disjoint minimal regions can be merged into non-minimal regions, thus reducing the number of places. This merging must preserve the excitation-closure of the final set of regions.
\end{enumerate}

\subsection{From LTS to SMs by regions}

We now show how to decompose an ECTS into a set of synchronizing SMs.

From the set of all minimal regions obtained from an ECTS we can extract subsets of regions representing state machines. 
A set of regions $R$ represents a state machine if $R$ covers all the states $S$ of the transition system and all the regions are disjoint, i.e.:
\[
\forall r \in R, \nexists r' \in R:  r \cap r' \neq \emptyset
\qquad   \wedge \qquad
\forall s \in S, \exists r \in R: s \in r
\]
Given a set of regions satisfying the previous properties we obtain a state machine whose places correspond to the regions, with a transition $\arc{r_i}{e}{r_j}$ when $r_i$ and $r_j$ are pre- and post-regions of $e$, respectively. 
Since the regions of an SM are disjoint, each derived SM has only one marked place, which corresponds to the regions that cover the initial state.
Notice that \emph{only} the events that cross some region appear in the SM.
Notice also that the reachability property of the original TS is inherited by the SMs obtained by this construction.
\begin{theorem} \label{thm:regions_to_SM}
	Given an ECTS $\textit{TS}=(S,E,T,s_0)$ and the set of all its minimal regions, a subset of regions $R$ represents an SM if and only if the set covers all the states of $\textit{TS}$ and all its regions are pairwise disjoint.
\end{theorem}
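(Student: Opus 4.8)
The plan is to spell out the net associated with $R$ and then prove the two implications. Given $R$, let $N_R=(R,T_R,F_R,M_0)$ be the net whose places are the regions in $R$, whose transitions are $T_R=\{e\in E:\ enter(e,r)\text{ or }exit(e,r)\text{ for some }r\in R\}$ (the events crossing some region of $R$), whose flow relation has an arc $r\to e$ for every $r\in\preregion{e}\cap R$ and an arc $e\to r$ for every $r\in\postregion{e}\cap R$, and whose initial marking is $M_0=\{r\in R:\ s_0\in r\}$. Saying that $R$ \emph{represents an SM} means that $N_R$ is a (safe) state machine in the sense of Definition~\ref{def:SM}, i.e. $|M_0|=1$ and $|\pre{e}|=|\post{e}|=1$ for every $e\in T_R$. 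The only fact about regions I will use repeatedly is the following consequence of Definition~\ref{def:reg}: $exit(e,r)$ forces \emph{every} $e$-labelled transition to leave $r$, and dually $enter(e,r)$ forces every $e$-labelled transition to enter $r$; indeed $exit(e,r)$ excludes $in(e,r)$, $out(e,r)$ and $enter(e,r)$, so no $e$-transition can have both endpoints inside $r$, both outside $r$, or run from outside into $r$.

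For the ``if'' direction, assume $R$ covers $S$ and is pairwise disjoint, so $R$ partitions $S$. Exactly one block contains $s_0$, hence $|M_0|=1$. Fix $e\in T_R$ and a block $r_0\in R$ it crosses; by symmetry assume $exit(e,r_0)$. By the region fact every $e$-transition leaves $r_0$, so all their sources lie in $r_0$; disjointness then makes $r_0$ the unique pre-region of $e$ in $R$, so $|\pre{e}|=1$. The targets of $e$-transitions lie in $S\setminus r_0$; pick one, $s'$, and let $r_1\in R$ be its block (using that $R$ covers $S$). Then $r_1\neq r_0$, and the transition into $s'$ enters $r_1$ (its source is in $r_0$, hence not in $r_1$), so $enter(e,r_1)$; by the region fact all $e$-transitions then end in $r_1$, and disjointness makes $r_1$ the unique post-region of $e$ in $R$, so $|\post{e}|=1$. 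Thus $N_R$ is a state machine, and it is safe because every firing conserves the single token.

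For the ``only if'' direction I prove the contrapositive in two steps, using that every state of $\textit{TS}$ is reachable from $s_0$. \emph{Step 1: $R$ covers $S$.} If $s_0$ is uncovered then $M_0=\emptyset$, not an SM. Otherwise take a path from $s_0$ to an uncovered state and let $\arc{s}{e}{s'}$ be its first edge with $s\in\bigcup R$ and $s'\notin\bigcup R$; for any block $r\in R$ with $s\in r$ this edge witnesses $exit(e,r)$, so $e\in T_R$. A post-region $r'\in\postregion{e}\cap R$ would force $s'\in r'$ by the region fact, contradicting $s'\notin\bigcup R$; hence $|\post{e}|=0$ and $N_R$ is not an SM. \emph{Step 2: $R$ is pairwise disjoint,} assuming now that $R$ covers $S$. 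I show by induction on the distance of a state from $s_0$ that every state lies in exactly one block of $R$; since all states are reachable, this is pairwise disjointness. The base case is $|M_0|=1$. For the step take $\arc{s}{e}{s'}$ with $s$ at smaller distance and, by induction, in a unique block $r_s$. If $e\notin T_R$, then $e$ crosses no block, in particular not $r_s$, so $s'\in r_s$; and if another block $r'$ also contained $s'$, then $e$ does not enter $r'$ either, forcing $s\in r'$ and contradicting uniqueness for $s$. If $e\in T_R$, it has a unique pre-region $r_p$ and post-region $r_q$ in $R$; the region fact gives $s\in r_p$, so $r_p=r_s$ and $s'\notin r_p$, and it gives $s'\in r_q$; if another block $r'\neq r_q$ contained $s'$, then $s\notin r'$ (if $s\in r'$ then either $r'=r_s=r_p$, whence $s'\in r_p$ against $s'\notin r_p$, or $r'\neq r_s$, contradicting uniqueness for $s$), so the edge enters $r'$ and $r'$ is a second post-region of $e$ --- contradiction. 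This completes the induction, hence the proof.

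The main obstacle is Step 2: overlapping regions need not violate the local arc count $|\pre{e}|=|\post{e}|=1$ at any single transition, so there is no purely local witness of failure. The resolution is the global induction along shortest paths from $s_0$: it propagates ``lies in a unique block'' forward and converts a hypothetical second block through a reachable state into a genuine second pre- or post-region of the event reaching that state. Getting the case split ($e$ crossing the current block or not) and the bookkeeping with the region implications right is the delicate part; everything else is a direct check.
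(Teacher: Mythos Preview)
Your ``if'' direction is essentially the paper's argument: from disjointness you get uniqueness of the pre-region (the paper's first claim), and from coverage you get existence of a post-region (the paper's second claim), with the region implications doing the work in both cases. The paper phrases this as two separate claims but the content is the same.

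Where you genuinely go beyond the paper is the ``only if'' direction: the paper's proof establishes only that a disjoint cover yields an SM and never argues the converse. Your Step~1 (non-coverage forces a transition with empty post-set) and Step~2 (the induction along shortest paths from $s_0$, propagating ``unique block'' forward and turning a hypothetical overlap at a reachable state into a second pre- or post-region) are correct and supply what the paper omits. The use of reachability of every state from $s_0$ is legitimate here since the paper imposes it as a standing assumption on transition systems. One cosmetic point: you announce Step~2 as part of a contrapositive, but you actually argue it directly (assume $N_R$ is an SM and $R$ covers $S$, conclude disjointness); the logic is fine, only the label is slightly off.
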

\begin{proof}
The proof is based on the fact that every event appearing in one SM can only have one pre-region and one post-region in the SM. Therefore, each event has one incoming and one outgoing edge in the SM.

Given a collection $R$ of disjoint regions that cover all states of TS, each element $r_i \in R$ has entering, exiting and no-crossing events.
We claim that:
\begin{enumerate}
\item If event $e$ exits (enters) region $r_i \in R$ it cannot exit (enter) region \mbox{$r_j \in R, j \neq i$}.
\item If event $e$ exits (enters) region $r_i \in R$, there must be a region $r_j \in R, j \neq i,$ such that event $e$ enters (exits) \mbox{$r_j \in R, j \neq i$}.
\end{enumerate}
We prove the first claim.
Given a region $r_i$ with $e$ as exiting event, there cannot be another region $r_j$ such that $e$ is an exiting event also for $r_j$.
Otherwise, i.e. if $r_i \in\ \preregion{e}$ and $r_j \in\ \preregion{e}$, $j \neq i$, there are two transitions $\arc{s_a}{e}{s_b}$ and $\arc{s_c}{e}{s_d}$ with $s_a \in r_i$ and $s_c \in r_j$. There are two options for $s_b$: either it is inside or outside $r_j$, i.e., $s_b \in r_j$ or $s_b \not\in r_j$, which means that $e$ would either be entering or no-crossing for $r_j$, contradicting that by construction $r_j$ is a region with $e$ as an exiting arc. The same reasoning applies when $e$ is an entering event.

We prove the second claim: if event $e$ appears as exiting (entering) event of $r_i \in R$, it must appear as entering (exiting) event of $r_j \in R$. Indeed, suppose that $r_i \in\ \preregion{e}$, then there is a transition $\arc{s_a}{e}{s_b}$ with $s_a \in r_i$ and $s_b \not\in r_i$, but then there must exist a region $r_j \in R, j \neq i$, such that $s_b \in r_j$, because the union of the regions in $R$ covers all the states of the original $\textit{TS}$, and so $r_j \in \postregion{e}$. The case $r_i \in \postregion{e}$ is proved similarly.

Notice that we use also the fact that in our definition of TS we rule out self-loops.
\end{proof}

The property of excitation closure can be inherited by the SMs, as stated in the following definition.
\begin{definition}[Excitation-closed set of State Machines derived from an ECTS] \label{def:ec_set}
	Given a set of SMs $S$ derived from an ECTS $\textit{TS}$, the set of all regions $R$ of $S$, the set of labels $E$ of $\textit{TS}$, the sets of pre-regions $\preregion{e}$ of the $\textit{TS}$ for all $e \in E$:
	
	$S$ is excitation-closed with respect to the regions of $\textit{TS}$ if the following condition is satisfied:
	\begin{itemize}
		\item EC: 
		$\forall e \in E: \bigcap_{r \in (\preregion{e} \cap R)} r=\es{e}$
		\item Event effectiveness: $\forall e \in E: \exists r \in R \mid r \in \preregion{e}$
	\end{itemize}
\end{definition}

\section{The decomposition algorithm} \label{section:decomposition}

The first step to decompose a transition system is to enumerate all the minimal regions of the original TS. Each collection of disjoint regions covering all the states of the TS represents a state machine, such that the regions are mapped to places of the SM, i.e., each such SM includes a subset of regions of the original TS and represents only the behavior related to the transitions entering into these regions or exiting from them (instead, internal and external events are missing). 

The example in Sec. \ref{section:relations} shows also that we do not need all the SMs to reconstruct the original LTS, so the question is how many of them we need and which is the ``best'' (in some sense) subset of SMs sufficient to represent the given LTS. 
Therefore we may set up a search to obtain a subset of SMs, which are excitation-closed and cover all events, to yield a composition equivalent to the original TS. 
An easy strategy to guarantee the complete coverage of all events is to add new SMs until all regions are used. However, the resulting collection of SMs may contain completely or partially redundant SMs (see Secs.~\ref{section:greedy_removal} and \ref{section:merge}), which can be removed exactly or greedily by verifying the excitation-closure property. 
Moreover, the size of the selected SMs can be reduced through removing redundant labels by merging regions.
As a summary, Algorithm~\ref{alg:decomposition} shows a preliminary sketch of the decomposition procedure.

\begin{wrapfigure}{l}{0.65\linewidth}
	\vspace{-0.6cm}
	\centering
	\scalebox{0.8}{
	\begin{minipage}{1.25\linewidth}
	\begin{algorithm}[H]
		\caption{Decomposition}
		\label{alg:decomposition}
		\begin{algorithmic}[1]
			\Require~~ An ECTS
			\Ensure~~ A minimal set of interacting SMs
			\State {Computation of all minimal regions
			}
			\State {Generation of a set of SMs with \textit{EC} property}
			\State {Removal of redundant SMs}
			\State{Merge of regions preserving the \textit{EC} property}
			
		\end{algorithmic}
	\end{algorithm}
	\end{minipage}
	}
	\vspace{-0.4cm}
\end{wrapfigure}
The first step of the algorithm can be achieved by a greedy algorithm from the literature, which checks minimality while creating regions \cite{regions}\cite[p.~103]{badouel2015petri}\cite{cortadella1997petrify}. 

The second step of the decomposition algorithm is performed by reducing it to an instance of maximal independent set (MIS)\footnote{
	Given an undirected graph $G=(V,E)$, an \textbf{independent set} is a subset of nodes $U \subseteq V$ such that no two nodes in $U$ are adjacent.  
An independent set is maximal if no node can be added without violating independence.
}, and by calling a MIS solver on the graph whose vertices correspond to the minimal regions with edges which connect intersecting regions. Each \textit{maximal independent set} of the aforementioned graph corresponds to a set of disjoint regions that define an SM. 

A greedy algorithm is used for the computation of the third step: starting from the SM with the highest number of regions, one removes each SM whose removal does not invalidate the ECTS properties.

The last step of merging is reduced to a SAT instance, by encoding all the regions of each SM and also the events implied by the presence of one or more regions. Solving this SAT instance by a SAT solver, the number of labels can be minimized by merging the regions which occur multiple times in different SMs.

\subsection{Generation of a set of SMs with excitation closure}

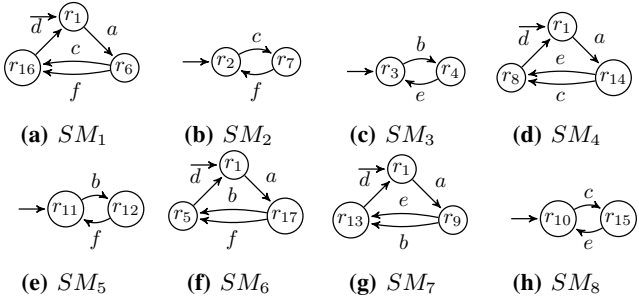
\begin{figure}
	\centering
	\begin{subfigure}[t]{.1\textwidth}
		\centering
		\scalebox{0.8}{
		\begin{tikzpicture}[->,>=stealth',shorten >=1pt,auto,node distance=1.2cm,semithick, initial text={}]
		\tikzstyle{every state}=[]
		
		\node [initial,state,inner sep=1pt,minimum size=0pt] (r1) {$r_1$};
		\node [state,inner sep=1pt,minimum size=0pt] (r6)  [below right of=r1] {$r_6$};
		\node [state,inner sep=1pt,minimum size=0pt] (r16)  [below left of=r1] {$r_{16}$};
		
		\path (r1) edge node {$a$} (r6)
		(r16) edge node {$d$} (r1)
		(r6) edge [bend left=10] node {$f$} (r16)
		(r6) edge [bend right=10, above] node {$c$} (r16)
		;
		\end{tikzpicture}}
		\caption{$SM_1$}
	\end{subfigure}
	~
	\begin{subfigure}[t]{.1\textwidth}
		\centering
		\scalebox{0.8}{
		\begin{tikzpicture}[->,>=stealth',shorten >=1pt,auto,node distance=1cm,semithick, initial text={}]
		\tikzstyle{every state}=[]
		
		\node [initial, state,inner sep=1pt,minimum size=0pt] (r2) {$r_2$};
		\node [state,inner sep=1pt,minimum size=0pt] (r7) [right of=r2] {$r_7$};
		
		\path (r2) edge [bend left] node {$c$} (r7)
		(r7) edge [bend left] node {$f$} (r2)
		;
		\end{tikzpicture}}
		\caption{$SM_2$}
	\end{subfigure}
	~
	\begin{subfigure}[t]{.1\textwidth}
		\centering
		\scalebox{0.8}{
		\begin{tikzpicture}[->,>=stealth',shorten >=1pt,auto,node distance=1cm,semithick, initial text={}]
		\tikzstyle{every state}=[]
		
		\node [initial, state,inner sep=1pt,minimum size=0pt] (r3) {$r_3$};
		\node [state,inner sep=1pt,minimum size=0pt] (r4) [right of=r3] {$r_4$};
		
		\path (r3) edge [bend left] node {$b$} (r4)
		(r4) edge [bend left] node {$e$} (r3)
		;
		\end{tikzpicture}}
		\caption{$SM_3$}
	\end{subfigure}
	~
	\begin{subfigure}[t]{.1\textwidth}
		\centering
		\scalebox{0.8}{
		\begin{tikzpicture}[->,>=stealth',shorten >=1pt,auto,node distance=1.2cm,semithick, initial text={}]
		\tikzstyle{every state}=[]
		
		\node [initial,state,inner sep=1pt,minimum size=0pt] (r1) {$r_1$};
		\node [state,inner sep=1pt,minimum size=0pt] (r8)  [below left of=r1] {$r_8$};
		\node [state,inner sep=1pt,minimum size=0pt] (r14)  [below right of=r1] {$r_{14}$};
		
		\path (r1) edge node {$a$} (r14)
		(r8) edge node {$d$} (r1)
		(r14) edge [bend left=10] node {$c$} (r8)
		(r14) edge [bend right=10, above] node {$e$} (r8)
		;
		\end{tikzpicture}}
		\caption{$SM_4$}
	\end{subfigure}
	
	\begin{subfigure}[t]{.1\textwidth}
		\centering
		\scalebox{0.8}{
		\begin{tikzpicture}[->,>=stealth',shorten >=1pt,auto,node distance=1cm,semithick, initial text={}]
		\tikzstyle{every state}=[]
		
		\node [initial, state,inner sep=1pt,minimum size=0pt] (r11) {$r_{11}$};
		\node [state,inner sep=1pt,minimum size=0pt] (r12) [right of=r11] {$r_{12}$};
		
		\path (r11) edge [bend left] node {$b$} (r12)
		(r12) edge [bend left] node {$f$} (r11)
		;
		\end{tikzpicture}}
		\caption{$SM_5$}
	\end{subfigure}
    ~
	\begin{subfigure}[t]{.1\textwidth}
		\centering
		\scalebox{0.8}{
		\begin{tikzpicture}[->,>=stealth',shorten >=1pt,auto,node distance=1.2cm,semithick, initial text={}]
		\tikzstyle{every state}=[]
		
		\node [initial,state,inner sep=1pt,minimum size=0pt] (r1) {$r_1$};
		\node [state,inner sep=1pt,minimum size=0pt] (r5)  [below left of=r1] {$r_5$};
		\node [state,inner sep=1pt,minimum size=0pt] (r17)  [below right of=r1] {$r_{17}$};
		
		\path (r1) edge node {$a$} (r17)
		(r17) edge [bend right=10, above] node {$b$} (r5)
		(r17) edge [bend left=10] node {$f$} (r5)
		(r5) edge node {$d$} (r1)
		;
		\end{tikzpicture}}
		\caption{$SM_6$}
	\end{subfigure}
	~
	\begin{subfigure}[t]{.1\textwidth}
		\centering
		\scalebox{0.8}{
		\begin{tikzpicture}[->,>=stealth',shorten >=1pt,auto,node distance=1.2cm,semithick, initial text={}]
		\tikzstyle{every state}=[]
		
		\node [initial,state,inner sep=1pt,minimum size=0pt] (r1) {$r_1$};
		\node [state,inner sep=1pt,minimum size=0pt] (r9)  [below right of=r1] {$r_9$};
		\node [state,inner sep=1pt,minimum size=0pt] (r13)  [below left of=r1] {$r_{13}$};
		
		\path (r1) edge node {$a$} (r9)
		(r13) edge node {$d$} (r1)
		(r9) edge [bend left=10] node {$b$} (r13)
		(r9) edge [bend right=10, above] node {$e$} (r13)
		;
		\end{tikzpicture}}
		\caption{$SM_7$}
	\end{subfigure}
	~
	\begin{subfigure}[t]{.1\textwidth}
		\centering
		\scalebox{0.8}{
		\begin{tikzpicture}[->,>=stealth',shorten >=1pt,auto,node distance=1cm,semithick, initial text={}]
		\tikzstyle{every state}=[]
		
		\node [initial, state,inner sep=1pt,minimum size=0pt] (r10) {$r_{10}$};
		\node [state,inner sep=1pt,minimum size=0pt] (r15) [right of=r10] {$r_{15}$};
		\path (r10) edge [bend left] node {$c$} (r15)
		(r15) edge [bend left] node {$e$} (r10)
		;
		\end{tikzpicture}}
		\caption{$SM_8$}
	\end{subfigure}
	\caption{All SMs created from TS in Fig.~\ref{fig:TS}.}
	\label{fig:complete_decomposition_TS}
	\vspace{-0.5cm}
\end{figure}

Given a set of minimal regions of an excitation-closed TS,
Algorithm~\ref{alg:EC_set_generation} returns an excitation-closed set
of SMs, by associating sets of non-overlapping regions to SMs
as mentioned below.
Notice that in Def.~\ref{def:ec_set} we extended
Def.~\ref{def:excitation_closure} of an excitation-closed transition
system (ECTS) to an excitation-closed set of SMs, 
by requiring that the two properties of excitation-closure and 
event-effectiveness hold on the union of regions underlying the SMs.

\begin{figure}
\vspace{-0.2cm}
\centering
\scalebox{0.8}{
\begin{minipage}{1.2\linewidth}
\begin{algorithm}[H]
\caption{Generation of excitation-closed set of SMs}
\label{alg:EC_set_generation}
\begin{algorithmic}[1]
\Require~~ Set of minimal regions of an ECTS 
\Ensure~~ An excitation-closed set of SMs

\State{Create the graph $G$ where each node is a region and there is an edge between intersecting regions} \label{line:graph_creation}
\State{$G_{0}$ $\leftarrow G$} \label{line:g0} 
\State{$M$ $\leftarrow \emptyset$}, $F$ $\leftarrow \emptyset$ \label{line:empty_MIS_F} 
\Do
    \State{Compute $m = MIS(G)$} \label{line:first_MIS} 
    \State{$M \leftarrow M \cup \{ m \}$} \label{line:store_MIS} 
    \State{$G \leftarrow G \setminus M$} \label{line:vertices_removal} 
\doWhile{$G \neq \emptyset$} 
\For{$m \in M$} \label{line:start_sm_creation}
	\State{Compute $\tilde{m} = MIS(G_0)$ with the constraint $\tilde{m} \supseteq m$}
        \State{Build state machine $\tilde{sm}$ induced by set of regions $\tilde{m}$} \label{line:store_SM} 
	\State{$F \leftarrow F \cup \{ \tilde{sm} \}$} \label{line:store_F} 
\EndFor
\State{\Return {$F$}}
\end{algorithmic}
\end{algorithm}
\end{minipage}}
\vspace{-0.6cm}
\end{figure}
Initially, Algorithm~\ref{alg:EC_set_generation} converts the minimal regions of the TS into a graph $G$, where intersecting regions define edges between the nodes of $G$ (line \ref{line:graph_creation}). 
As long as $G$ is not empty, the search of the maximal independent sets is performed on it by invoking the procedure MIS on $G$ ($\mis{G}$, line \ref{line:first_MIS}), storing the results in $M$ (line \ref{line:store_MIS}) and removing the vertices selected at each iteration (line \ref{line:vertices_removal}). In this way, each vertex will be included in one MIS solution.
Notice that the maximal independent sets computed after the first one are not maximal with respect to the original graph $G_0$, because the MIS procedure is run on a subgraph of $G_0$ without the previously selected nodes. 
To be sure 
that we obtain maximal independent sets with respect to the original $G_0$, we expand to maximality the independent sets in $M$, by invoking the MIS procedure on each independent set $m \in M$ constrained to obtain a maximal independent set $\tilde{m} \supset m$ on $G_0$ (from line \ref{line:start_sm_creation}).
Then from the maximal independent sets we obtain the induced state machines to be stored in $F$ (from line \ref{line:store_F}). The motivation of this step to enlarge the independent sets is to increase the number of regions for each SM, in order to widen the space of solutions for the successive optimizations of redundancy elimination and merging. 
The set of SMs derived from Algorithm~\ref{alg:EC_set_generation} satisfies the EC and event-effectiveness properties because by construction each region is included in at least one independent set.

Fig.~\ref{fig:complete_decomposition_TS} shows the resultant SMs derived from the TS in Fig.~\ref{fig:TS}

\subsection{Removal of the redundant SMs} \label{section:greedy_removal}

\begin{table*}[ht]
		\resizebox{\textwidth}{!}{
		\begin{tabular}{|cl|rrrr|S[table-format=4.2]S[table-format=2.2]S[table-format=2.2]S[table-format=2.2]S[table-format=5.2]|S[table-format=2.2]S[table-format=2.2]S[table-format=2.2]S[table-format=2.2]|} \hline
			&\multicolumn{1}{c|}{\bf Input} & \multicolumn{1}{c}{\bf States} & \multicolumn{1}{c}{\bf Transitions} & \multicolumn{1}{c}{\bf Events} & \multicolumn{1}{c|}{\bf Regions} & \multicolumn{1}{c}{\begin{tabular}[c]{@{}c@{}}\bf Time\\ \bf region\\ \bf generation \\ \bf {[}s{]}\end{tabular}} & \multicolumn{1}{c}{\begin{tabular}[c]{@{}c@{}}\bf Time\\ \bf \bf decomposition \\ \bf {[}s{]}\end{tabular}} & \multicolumn{1}{c}{\begin{tabular}[c]{@{}c@{}}\bf Time\\ \bf Greedy \\ \bf {[}s{]}\end{tabular}} & \multicolumn{1}{c}{\begin{tabular}[c]{@{}c@{}}\bf Time \\ \bf Merge \\ \bf {[}s{]}\end{tabular}} & \multicolumn{1}{c|}{\begin{tabular}[c]{@{}c@{}}\bf Total \\ \bf time \\ \bf {[}s{]}\end{tabular}} & \multicolumn{1}{c}{\begin{tabular}[c]{@{}c@{}}\bf Time\\ \bf region\\ \bf generation\\ \bf {[}\%{]}\end{tabular}} & \multicolumn{1}{c}{\begin{tabular}[c]{@{}c@{}}\bf Time\\ \bf decomposition \\ \bf {[}\%{]}\end{tabular}} & \multicolumn{1}{c}{\begin{tabular}[c]{@{}c@{}}\bf Time\\ \bf Greedy \\ {[}\bf \%{]}\end{tabular}} & \multicolumn{1}{c|}{\begin{tabular}[c]{@{}c@{}}\bf Time \\ \bf Merge \\ {[}\bf \%{]}\end{tabular}}\\ \hline
			\parbox[t]{2mm}{\multirow{16}{*}{\rotatebox[origin=c]{90}{``Small-sized" set}}} & alloc-outbound & 17     & 18          & 14              & 15      & 0.00                      & 0.25                   & 0.00           & 0.06           & 0.31             & 0.36     & 80.36 & 0.07 & 19.21                    \\
			& clock          & 10     & 10          & 4               & 11      & 0.01                      & 0.20                   & 0.00            & 0.03           & 0.24             & 3.02               & 85.71 & 0.13 & 11.14          \\
			& dff            & 20     & 24          & 7               & 20      & 0.29                      & 0.20                   & 0.00            & 0.77           & 1.27            & 23.28                    & 15.50 & 0.08 & 61.14    \\
			& espinalt       & 27     & 31          & 20              & 23      & 0.00                  & 0.21                    & 0.00            & 0.49           & 0.70             & 0.37                     & 29.54 & 0.07 & 70.02    \\
			& fair\_arb      & 13     & 20          & 8               & 11      & 0.02                      & 0.20                   & 0.00            & 0.03           & 0.25            & 8.80                 & 80.41 & 0.04 & 10.74        \\
			& future         & 36     & 44          & 16              & 19      & 0.03                      & 0.21               & 0.00            & 0.11          & 0.35            & 9.40                   & 60.35 & 0.20 & 30.05      \\
			& intel\_div3    & 8      & 8           & 4               & 8       & 0.00                      & 0.23                   & 0.00            & 0.01           & 0.24             & 0.75                & 94.73 & 0.04 & 4.48         \\
			& intel\_edge    & 28     & 36          & 6               & 27      & 1.60                      & 0.20                   & 0.00            & 1.30           & 3.11             & 51.58                & 6.41  & 0.14 & 41.86        \\
			& isend          & 53     & 66          & 15              & 128     & 57.67                     & 0.31                   & 0.32            & 1.04           & 59.33 & 97.21        & 0.51  & 0.53 & 1.75                 \\
			& lin\_edac93    & 20     & 28          & 8               & 10      & 0.00                      & 0.19                   & 0.00            & 0.01           & 0.21             & 1.16             & 93.38 & 0.10 & 5.36            \\
			& master-read    & 8932   & 36       & 26              & 33      & 6.71                      & 0.53                   & 0.12            & 1.03           & 8.39             & 80.00         & 6.28  & 1.45 & 12.28               \\
			& pe-rcv-ifc     & 46     & 62          & 16              & 7       & 8.80                        & 0.19                     & 0.00              & 1.21             & 10.21              & 86.20                  & 1.90  & 0.01 & 11.90      \\
			& pulse          & 12     & 12          & 6               & 33      & 0.00                      & 0.19                   & 0.00            & 0.01           & 0.19             & 0.36           & 96.62 & 0.05 & 2.96              \\
			& rcv-setup      & 14     & 17          & 10              & 11      & 0.00                      & 0.19                   & 0.00            & 0.04           & 0.23             & 1.41                & 81.36 & 0.09 & 17.14         \\
			& vme\_read      & 255    & 668         & 26              & 44      & 0.53                       & 0.20                   & 0.01            & 15.17          & 15.91            & 3.36            & 1.23  & 0.04 & 95.37             \\
			& vme\_write     & 821    & 2907        & 30              & 51      & 2.78                      & 0.24                   & 0.03            & 30.03          & 33.08            & 8.39         & 0.73  & 0.10 & 90.77                \\ \hline
			\parbox[t]{2mm}{\multirow{19}{*}{\rotatebox[origin=c]{90}{``Large-sized" set}}} 
			& art\_3\_10     & 32000  & 93200       & 60              & 64      & 154.96                   & 2.02                   & 0.07            & 1.18           & 158.23          & 97.93          & 1.28  & 0.04 & 0.75              \\
			& art\_3\_11     & 42592  & 124388      & 66              & 70      & 105.51                   & 3.02                   & 0.31            & 1.71           & 110.55          & 95.44            & 2.73  & 0.28 & 1.55            \\
			& art\_3\_12     & 55296  & 161856      & 72              & 76      & 133.58                   & 4.21                   & 0.39            & 2.06          & 140.24          & 95.25             & 3.01  & 0.27 & 1.47           \\
			& art\_3\_13     & 70304  & 206180      & 78              & 83      & 1153.20                  & 6.97                   & 1.52            & 2.84           & 1164.54         & 99.03           & 0.60  & 0.13 & 0.24             \\
			& art\_3\_14     & 87808  & 257936      & 84              & 88      & 2062.91                  & 9.09                   & 0.94             & 3.49         & 2076.43         & 99.35           & 0.44  & 0.05 & 0.17             \\
			& art\_3\_15     & 108000 & 317700      & 90              & 94      & 2240.17                  & 10.20                  & 0.77            & 4.11           & 2255.25         & 99.33          & 0.45  & 0.03 & 0.18              \\
			& art\_3\_16     & 131072 & 386048      & 96              & 100     & 971.23                   & 12.70                  & 0.56             & 5.75           & 990.23          & 98.08           & 1.28  & 0.06 & 0.58             \\
			& art\_3\_17     & 157216 & 463556      & 102             & 108     & 6068.14                  & 15.84                  & 4.81            & 0.43           & 6089.22         & 99.65           & 0.26  & 0.08 & 0.01             \\
			& art\_3\_18     & 186624 & 550800      & 108             & 112     & 5133.03                  & 16.57                  & 0.95             & 0.47           & 5151.01         & 99.65           & 0.32  & 0.02 & 0.01             \\
			& art\_3\_19     & 219488 & 648356      & 114             & 118     & 904.41                   & 18.84                  & 1.11            & 0.57           & 924.93          & 97.78        & 2.04  & 0.12 & 0.06                \\
			& art\_3\_20     & 256000 & 756800      & 120             & 124     & 11915.93                  & 30.30                   & 1.97            & 0.65           & 11948.85         & 99.72           & 0.25  & 0.02 & 0.01             \\
			& art\_4\_04     & 32768  & 120832      & 32              & 38      & 65.30                    & 2.23                   & 0.55            & 0.27           & 68.35           & 95.54        & 3.26  & 0.81 & 0.40                \\
			& art\_4\_05     & 80000  & 300000      & 40              & 46      & 232.23                   & 5.95                   & 0.49             & 0.67            & 239.34          & 97.03         & 2.49  & 0.21 & 0.28               \\
			& art\_4\_06     & 165888 & 628992      & 48              & 55      & 768.95                   & 16.57                  & 5.61            & 0.96            & 792.09          & 97.08         & 2.09  & 0.71 & 0.12               \\
			& art\_4\_07     & 307328 & 1174432     & 56              & 62      & 2151.37                  & 28.12                 & 4.53            & 1.06           & 2185.07         & 98.46          & 1.29  & 0.21 & 0.05              \\
			& art\_4\_08     & 524288 & 2015232     & 64              & 70      & 3373.92                  & 61.78                  & 10.88           & 1.60           & 3448.17         & 97.85         & 1.79  & 0.32 & 0.05               \\
			& art\_4\_09     & 839808 & 3242592     & 72              & 78      & 4293.87                   & 57.98                 & 4.95            & 2.07           & 4358.87          & 98.51          & 1.33  & 0.11 & 0.05              \\
			& seq\_40        & 164    & 164         & 164             & 164     & 0.04                     & 0.23                   & 0.00            & 1.47           & 1.75            & 2.54           & 13.19 & 0.01 & 84.26              \\
			& pparb\_2\_6    & 69632  & 321536      & 34              & 77      & 886.54                   & 25.51                  & 30.27           & 8.32           & 950.65          & 93.26          & 2.68  & 3.18 & 0.88              \\ \hline
			& AVERAGE        &        &             &                 &         &                             &                          &                   &                  &                    & 63.70                   & 20.69 & 0.28 &    15.33                                                                \\ \hline                                 
		\end{tabular}}
		\caption{TS statistics and CPU time for each decomposition step including the time spent to generate the regions}
		\label{tab:reg_gen_time}
	\vspace{-0.5cm}
\end{table*}

The set of SMs generated by Algorithm~\ref{alg:EC_set_generation} may be redundant, i.e., it may contain a subset of SMs which still define an ECTS. We describe a greedy search algorithm to obtain an irredundant set of SMs: we
order all the SMs by size and try to remove them one by one starting from the largest to the smallest, 
by checking that the union of the remaining regions satisfies \textit{excitation closure} and \textit{event effectiveness}. If excitation closure and event effectiveness are preserved, then the given SM can be removed. This algorithm is not optimal, because the removal of an SM may prevent the removal of a set of smaller SMs whose sum of places is greater than the number of places of the removed SM. However, this approach guarantees good performance having linear complexity in the number of SMs.

After the removal of the redundant SMs from the set shown in Fig.~\ref{fig:complete_decomposition_TS} only $SM_4$, $SM_5$, $SM_6$ and $SM_8$ are left.

\subsection{Merge between regions preserving the excitation closure}
\label{section:merge}

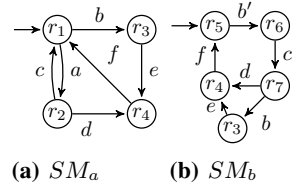
\begin{wrapfigure}{l}{0.43\linewidth}
	\centering
	\vspace{-0.3cm}
	\begin{subfigure}[t]{0.45\linewidth}
		\centering
		\scalebox{0.8}{
		\begin{tikzpicture}[->,>=stealth',shorten >=1pt,auto,node distance=1.4cm,semithick, initial text={}]
		\tikzstyle{every state}=[inner sep=1pt,minimum size=0pt]
		
		\node [initial,state] (r1) {$r_1$};
		\node [state] (r2)  [below of=r1] {$r_2$};
		\node [state] (r3)  [right of=r1] {$r_3$};
		\node [state] (r4) [right of=r2] {$r_4$};

		\path (r1) edge [bend left=10] node {$a$} (r2)
		(r1) edge node {$b$} (r3)
		(r2) edge [bend left=10] node {$c$} (r1)
		(r2) edge [below left] node {$d$} (r4)
		(r3) edge node {$e$} (r4)
		(r4) edge [above right] node {$f$} (r1)
		;
		\end{tikzpicture}}
		\caption{$SM_a$}
		\end{subfigure}
	~
	\hfill
	\begin{subfigure}[t]{0.45\linewidth}
		\centering
		\scalebox{0.8}{
		\begin{tikzpicture}[->,>=stealth',shorten >=1pt,auto,node distance=1cm,semithick, initial text={}]
		\tikzstyle{every state}=[]
		
		\node [state,inner sep=1pt,minimum size=0pt] (r3) {$r_3$};
		\node [state,inner sep=1pt,minimum size=0pt] (r7) [above right of=r3] {$r_7$};
		\node [state,inner sep=1pt,minimum size=0pt] (r6) [above of=r7] {$r_6$};
		\node [initial,state,inner sep=1pt,minimum size=0pt] (r5)  [left of=r6] {$r_5$};
		\node [state,inner sep=1pt,minimum size=0pt] (r4)  [below of=r5] {$r_4$};

		\path (r3) edge [left] node {$e$} (r4)
		(r4) edge node {$f$} (r5)
		(r5) edge node {$b'$} (r6)
		(r6) edge node {$c$} (r7)
		(r7) edge node {$b$} (r3)
		(r7) edge [above] node {$d$} (r4)
		;
		\end{tikzpicture}}
		\caption{$SM_b$}
		\end{subfigure}
	\caption{Initial SMs.}
	\label{fig:decomposition_ECTS3}
	\vspace{-0.3cm}
	\end{wrapfigure}

The third step of the procedure merges pairs of regions with the objective to minimize the size of the sets of SMs: edges carrying labels are removed, and by consequence the two nodes connected to them are merged decreasing their number. E.g., both the SMs in Fig.~\ref{fig:decomposition_ECTS3} (obtained from a TS different from the one in Fig.~\ref{fig:TS}) contain an instance of label $e$ connected by regions $r_3$ and $r_4$. This means that an edge carrying label $e$ can be removed in one of the SMs. The result of removing the edge with label $e$ in $\sm_b$ and merging the regions $r_3$ and $r_4$ replacing them with the region $r_{34}$ is shown in Fig.~\ref{fig:merge_ECTS3}.

\begin{wrapfigure}{l}{0.5\linewidth}
	\centering
	\vspace{-0.3cm}
	\begin{subfigure}[t]{0.45\linewidth}
		\centering
		\scalebox{0.8}{
		\begin{tikzpicture}[->,>=stealth',shorten >=1pt,auto,node distance=1.4cm,semithick, initial text={}]
		\tikzstyle{every state}=[]
		
		\node [initial,state,inner sep=1pt,minimum size=0pt] (r1) {$r_1$};
		\node [state,inner sep=1pt,minimum size=0pt] (r2)  [below of=r1] {$r_2$};
		\node [state,inner sep=1pt,minimum size=0pt] (r3)  [right of=r1] {$r_3$};
		\node [state,inner sep=1pt,minimum size=0pt] (r4) [right of=r2] {$r_4$};

		\path (r1) edge [bend left=10] node {$a$} (r2)
		(r1) edge node {$b$} (r3)
		(r2) edge [bend left=10] node {$c$} (r1)
		(r2) edge [below] node {$d$} (r4)
		(r3) edge node {$e$} (r4)
		(r4) edge [above right] node {$f$} (r1)
		;
		\end{tikzpicture}}
		\caption{$SM_a$}
	\end{subfigure}
	~
	\begin{subfigure}[t]{.45\linewidth}
		\centering
		\scalebox{0.8}{
		\begin{tikzpicture}[->,>=stealth',shorten >=1pt,auto,node distance=1cm,semithick, initial text={}]
		\tikzstyle{every state}=[]
		
		\node [state,inner sep=1pt,minimum size=0pt] (r34) {$r_{34}$};
		\node [initial,state,inner sep=1pt,minimum size=0pt] (r5)  [above of=r34] {$r_5$};
		\node [state,inner sep=1pt,minimum size=0pt] (r6) [right of=r5] {$r_6$};
		\node [state,inner sep=1pt,minimum size=0pt] (r7) [below of=r6] {$r_7$};
		
		\path 
		(r34) edge node {$f$} (r5)
		(r5) edge node {$b'$} (r6)
		(r6) edge node {$c$} (r7)
		(r7) edge [above, bend right=20] node {$b$} (r34)
		(r7) edge [below, bend left=20] node {$d$} (r34)
		;
		\end{tikzpicture}}
		\caption{$SM_b$}
	\end{subfigure}
	\caption{SMs of Fig.~\ref{fig:decomposition_ECTS3} after the removal of label $e$ in $SM_b$.}
	\label{fig:merge_ECTS3}
	\vspace{-0.3cm}
\end{wrapfigure}
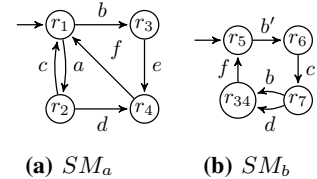
All instances of a region except one can be removed, because removing all of them would change the set of regions used for checking the excitation closure-property, whereas keeping at least one guarantees the preservation of the property.

We formulated the merging problem as solving an instance of SAT. We will skip the exact SAT clause encoding due to lack of space. 
According to the SAT solution, the SMs are restructured by removing
arcs and nodes to be deleted and adding merged nodes, and redirecting arcs as appropriate. In the running example, in $\sm_b$ we merge the nodes $r_3$, $r_4$ into 
node $r_{34}$, remove the edge labeled $e$ between the deleted nodes $r_3$ and $r_4$,
and redirect to $r_{34}$ the edges pointing to $r_3$ or $r_4$.

Instead, none of the four SMs surviving the irredundancy step from Fig.~\ref{fig:TS} is further minimized by the merging step.
\section{Composition of SMs and equivalence to the original TS} 
\label{section:relations}

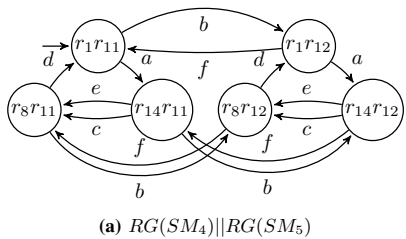
\begin{wrapfigure}{r}{0.56\linewidth}
	\centering
	\vspace{-0.3cm}
	\scalebox{0.8}{
	\begin{minipage}{\linewidth}
	\begin{subfigure}[t]{\linewidth}
		\hspace{-0.9cm}
	\begin{tikzpicture}[->,>=stealth',shorten >=1pt,auto,node distance=1.5cm,semithick, initial text={}]
	\tikzstyle{every state}=[]
	
	\node [initial,state,inner sep=1pt,minimum size=0pt] (r1r11) {$r_1r_{11}$};
	\node [state,inner sep=1pt,minimum size=0pt] (r8r11)  [below left of=r1r11] {$r_8r_{11}$};
	\node [state,inner sep=1pt,minimum size=0pt] (r14r11)  [below right of=r1r11] {$r_{14}r_{11}$};
	\node [state,inner sep=1pt,minimum size=0pt] (r1r12) [right of=r1r11, xshift=2cm]{$r_1r_{12}$};
	\node [state,inner sep=1pt,minimum size=0pt] (r8r12)  [below left of=r1r12] {$r_8r_{12}$};
	\node [state,inner sep=1pt,minimum size=0pt] (r14r12)  [below right of=r1r12] {$r_{14}r_{12}$};
	
	\path (r1r11) edge [bend left=10] node {$a$} (r14r11)
	(r8r11) edge [bend left=10] node {$d$} (r1r11)
	(r14r11) edge [bend left=10] node {$c$} (r8r11)
	(r14r11) edge [bend right=10, above] node {$e$} (r8r11)
	(r1r12) edge [bend left=10] node {$a$} (r14r12)
	(r8r12) edge [bend left=10] node {$d$} (r1r12)
	(r14r12) edge [bend left=10] node {$c$} (r8r12)
	(r14r12) edge [bend right=10, above] node {$e$} (r8r12)

	(r1r11) edge [bend left=30, below] node {$b$} (r1r12)
	(r8r11) edge [bend right=55, below] node {$b$} (r8r12)
	(r14r11) edge [bend right=50, below] node {$b$} (r14r12)
	(r1r12) edge [bend left=5, below] node {$f$} (r1r11)
	(r8r12) edge [bend left=45, above] node {$f$} (r8r11)
	(r14r12) edge [bend left=40, above] node {$f$} (r14r11)
	;
	\end{tikzpicture}
	\caption{$RG(SM_4)||RG(SM_5)$}
	\end{subfigure}
	\end{minipage}
	}
	\caption{Composition between $RG(SM_4)$ and $RG(SM_5)$ of Fig.~\ref{fig:complete_decomposition_TS}}
	\label{fig:composition_SM_45}
	\vspace{-0.3cm}
\end{wrapfigure}

Intuitively, the SMs derived from an LTS interact running in parallel with the same rules of the synchronous product  
of transition systems (see Def.~\ref{def:parallel_composition}). Indeed, if we interpret the reachability graphs of the SMs as LTSs and execute the synchronous product
deriving a single LTS which models the interaction of the SMs, it turns out that the result of the composition is equivalent to the original LTS, as proved in the appendix.
E.g., consider the composition of reachability graphs of SMs $\sm_4$ and $\sm_5$ in Fig.~\ref{fig:composition_SM_45}, it generates a superset of behaviors of the original LTS  in Fig.~\ref{fig:TS}: it produces the sequence ``$\mathrm{acbdaefd}$'' which is in the original LTS, but also new behaviors, like the sequences starting by the event $b$ (e.g. ``$\mathrm{bacfd}$''), 
which are not in the original LTS because some constraints of the original LTS are missing; indeed, these two SMs are not enough to satisfy the excitation-closure property, whereas event effectiveness is satisfied by them because all events are included in the composition. 
The composition of SMs can exhibit these hidden behaviors by including new regions: e.g., the composition of $\sm_4$ with $\sm_5$ includes two new regions $r_{11}$ and $r_{12}$ so that the events $b$ and $f$ show up in the composition.

The equivalence between an ECTS and the derived set of SMs is proved by defining a bisimulation between
the original TS and the synchronous product 
of the reachability graphs of the derived state machines 
$\rgraph{\sm_1} || \rgraph{\sm_2} || \dots$ $|| \rgraph{\sm_n}$,
denoted by $||_{i=1,\ldots,n} \rgraph{\sm_i}$.

\begin{theorem} \label{thm:equivalence-n}
Given an excitation-closed set \mbox{$\{\sm_1, \dots, \sm_n \}$} of SMs derived from the ECTS $\textit{TS}$, there is a bisimulation $B$ such that
\mbox{$\textit{TS} \sim_B ||_{i=1,\ldots,n} \rgraph{\sm_i}$}.
\end{theorem}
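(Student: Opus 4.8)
The plan is to exhibit an explicit candidate bisimulation $B$ and verify the two transfer conditions of Definition~\ref{def:bisimulation}. The natural candidate relates a state $s$ of the original $\textit{TS}$ to a tuple of SM-places $(r_{1},\dots,r_{n})$ (a state of $||_{i} \rgraph{\sm_{i}}$, where $r_{i}$ is a place/region of $\sm_{i}$) precisely when $s$ lies in every region of the tuple, i.e.
\[
B \;=\; \bigl\{\, (s,(r_{1},\dots,r_{n})) \;\mid\; s \in \textstyle\bigcap_{i=1}^{n} r_{i} \,\bigr\}.
\]
First I would check that $(s_{0},(r_{1,0},\dots,r_{n,0})) \in B$, which is immediate since each $\sm_{i}$ has as its unique marked place the region covering $s_{0}$, so $s_{0}$ belongs to all of them. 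I would also record the standing fact (from the construction of the SMs and Theorem~\ref{thm:regions_to_SM}) that the reachability graph of $\sm_{i}$, viewed as an LTS, has places for states and an edge $r \xrightarrow{e} r'$ exactly when $r \in \preregion{e}$, $r' \in \postregion{e}$; and that in $||_{i}\rgraph{\sm_{i}}$ an event $e$ moves coordinate $i$ iff $e$ crosses some region of $\sm_{i}$, leaving the other coordinates fixed.

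Next I would prove the forward condition: if $(s,(r_{1},\dots,r_{n})) \in B$ and $\arc{s}{e}{s'}\in T$, then the product can match it. For each $i$, either $e$ crosses $\sm_{i}$ — in which case $s \in r_{i}$ forces $r_{i} \in \preregion{e}$ (every $e$-arc leaves $r_i$, and one such arc starts at $s$), so there is a unique successor place $r_{i}' \in \postregion{e}$ and $s' \in r_{i}'$ because $s'$ must lie in some region of $\sm_i$ and the post-region is the only candidate consistent with the region property — or $e$ does not cross $\sm_{i}$, in which case $s,s'$ lie on the same side of every region of $\sm_i$; since $s\in r_i$ we get $s'\in r_i$, and we set $r_i'=r_i$. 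This yields a product transition $(r_{1},\dots,r_{n})\xrightarrow{e}(r_{1}',\dots,r_{n}')$ with $s'\in\bigcap_i r_i'$, i.e. $(s',(r_1',\dots,r_n'))\in B$. The event-effectiveness half of excitation closure is what guarantees $e$ actually appears in at least one $\sm_i$, so the product transition is genuinely labelled $e$.

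The harder direction, and the place where excitation closure is essential, is the backward condition: given $(s,(r_1,\dots,r_n))\in B$ and a product transition $(r_1,\dots,r_n)\xrightarrow{e}(r_1',\dots,r_n')$, I must produce $\arc{s}{e}{s'}\in T$ with $s'\in\bigcap_i r_i'$. The product transition means that for every $i$ with $e$ crossing $\sm_i$ we have $r_i\in\preregion{e}$, hence $s\in r_i$ gives $s\in\bigcap_{r\in\preregion{e}\cap R} r$, which by the EC property (Definition~\ref{def:ec_set}) equals $\es{e}$; therefore $s$ enables $e$ in the original $\textit{TS}$, so some $\arc{s}{e}{s'}\in T$ exists, and by determinism it is unique. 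It then remains to show this same $s'$ lands in every $r_i'$: for crossing coordinates, $s\in r_i\in\preregion{e}$ and the $e$-arc from $s$ forces $s'\in\post{r_i}$-successor $=r_i'$; for non-crossing coordinates $r_i'=r_i$ and the no-cross property gives $s'\in r_i$ since $s\in r_i$. I expect the main obstacle to be this backward step — specifically making airtight the claim that the product's choice of successor places $r_i'$ is exactly matched by the unique TS-successor $s'$, which is precisely where one must invoke $\bigcap_{r\in\preregion{e}\cap R} r=\es{e}$ rather than merely $\supseteq$, together with determinism and the absence of self-loops. Finally I would note that $B$ is total on reachable states on both sides (a routine induction using the two transfer properties just established, plus the $\textit{Ac}$ operation in the synchronous product), completing the bisimulation.
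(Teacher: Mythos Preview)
Your proposal is correct and follows essentially the same approach as the paper: the same candidate relation $B=\{(s,(r_1,\dots,r_n))\mid s\in\bigcap_i r_i\}$, the same case split on whether $e$ crosses $\sm_i$ for the forward direction, and the same use of the EC equality $\bigcap_{r\in\preregion{e}\cap R} r=\es{e}$ (together with Theorem~\ref{thm:regions_to_SM} to identify each $r_i$ with the unique pre-region in $\sm_i$) for the backward direction. Your explicit mention of event-effectiveness in the forward step and of determinism in the backward step makes two points more visible that the paper leaves implicit or handles in the preamble; the final remark on totality of $B$ over reachable states is harmless but not required by Definition~\ref{def:bisimulation}.
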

\begin{proof}
See the appendix.
\end{proof}

Theorem \ref{thm:equivalence-n} 
states that, given a set of SMs, the excitation closure and event effectiveness of the union of their regions is a necessary and sufficient condition to guarantee that their synchronous product is equivalent to the original TS. 

\section{Experimental results} \label{section:experiments}

\begin{table*}
\centering
\resizebox{0.9\textwidth}{!}{
\begin{tabular}{|l|rr|rrr|rrr|S[table-format=3.0]S[table-format=4.0]c|S[table-format=2.0]S[table-format=2.2]S[table-format=2.2]S[table-format=2.0]S[table-format=2.0]|}
\hline
\multicolumn{1}{|c|}{}                       & \multicolumn{11}{c|}{\bf Size comparison}    & \multicolumn{5}{c|}{\bf SM details}      \\ \hline
\multicolumn{1}{|c|}{\multirow{3}{*}{\bf Input}} & \multicolumn{2}{c|}{\multirow{2}{*}{\bf TS}}            & \multicolumn{3}{c|}{\multirow{2}{*}{\bf PN}}                               & \multicolumn{3}{c|}{\multirow{2}{*}{\bf PN*\footnotemark
}}                                                & \multicolumn{3}{c|}{\bf Synchronizing}                                     & \multicolumn{1}{c}{\multirow{3}{*}{\begin{tabular}[c]{@{}c@{}}\bf Number \\ \bf of SMs\end{tabular}}} & \multicolumn{1}{c}{\bf Avg.}   & \multicolumn{1}{c}{\bf Avg.}     & \multicolumn{1}{c}{\bf Places}  & \multicolumn{1}{c|}{\bf Alphabet} \\
\multicolumn{1}{|c|}{}                       & \multicolumn{2}{c|}{}                               & \multicolumn{3}{c|}{}                                                  & \multicolumn{3}{c|}{}                                          & \multicolumn{3}{c|}{\bf SMs}                                               & \multicolumn{1}{c}{}                                                                          & \multicolumn{1}{c}{\bf places} & \multicolumn{1}{c}{\bf alphabet} & \multicolumn{1}{c}{\bf largest} & \multicolumn{1}{c|}{\bf largest}  \\ \cline{2-12}
\multicolumn{1}{|c|}{}                       & \multicolumn{1}{c}{\bf States} & \multicolumn{1}{c|}{\bf T} & \multicolumn{1}{c}{\bf P} & \multicolumn{1}{c}{\bf T} & \multicolumn{1}{c|}{\bf C} & \multicolumn{1}{c}{\bf P} & \multicolumn{1}{c}{\bf T} & \multicolumn{1}{c|}{\bf C} & \multicolumn{1}{c}{\bf P} & \multicolumn{1}{c}{\bf T} & \multicolumn{1}{c|}{\bf C} & \multicolumn{1}{c}{}                                                                          & \multicolumn{1}{c}{\bf per SM} & \multicolumn{1}{c}{\bf per SM}   & \multicolumn{1}{c}{\bf SM}      & \multicolumn{1}{c|}{\bf SM}       \\ \hline
alloc-outbound                             & 21                         & 18                    & 14                    & 14                    & 3                     & 17                    & 18                    & 0                     & 17                    & 21                    & 0                     & 2                                                                                             & 8.50                       & 10.50                        & 10                          & 11                           \\
clock                                      & 10                         & 10                    & 8                     & 5                     & 4                     & 10                    & 10                    & 0                     & 11                    & 15                    & 0                     & 3                                                                                             & 3.67                       & 5.00                         & 4                           & 4                            \\
dff                                        & 20                         & 24                    & 13                    & 14                    & 21                    & 20                    & 20                    & 0                     & 25                    & 41                    & 0                     & 3                                                                                             & 8.33                       & 13.33                        & 13                          & 7                            \\
espinalt                                   & 27                         & 31                    & 22                    & 20                    & 5                     & 27                    & 25                    & 1                     & 29                    & 32                    & 0                     & 3                                                                                             & 9.33                       & 11.00                        & 11                          & 13                           \\
fair\_arb                                  & 13                         & 20                    & 11                    & 10                    & 4                     & 11                    & 10                    & 4                     & 12                    & 18                    & 0                     & 2                                                                                             & 6.00                       & 9.00                         & 6                           & 6                            \\
future                                     & 36                         & 44                    & 18                    & 16                    & 1                     & 30                    & 28                    & 0                     & 21                    & 22                    & 0                     & 3                                                                                             & 7.00                       & 7.33                         & 13                          & 14                           \\
intel\_div3                                & 8                          & 8                     & 7                     & 5                     & 2                     & 8                     & 8                     & 0                     & 10                    & 11                    & 0                     & 2                                                                                             & 5.00                       & 5.50                         & 6                           & 4                            \\
intel\_edge                                & 28                         & 36                    & 11                    & 15                    & 22                    & 21                    & 30                    & 56                    & 35                    & 68                    & 1                     & 4                                                                                             & 8.50                       & 16.75                        & 13                          & 6                            \\
isend                                      & 53                         & 66                    & 25                    & 27                    & 106                   & 54                    & 43                    & 5                     & 80                    & 138                   & 4                     & 13                                                                                            & 6.31                       & 11.85                        & 12                          & 11                           \\
lin\_edac93                                & 20                         & 28                    & 10                    & 8                     & 1                     & 14                    & 12                    & 0                     & 13                    & 14                    & 0                     & 3                                                                                             & 4.33                       & 4.67                         & 5                           & 6                            \\
master-read                                & 8932                       & 36226                 & 33                    & 26                    & 0                     & 33                    & 26                    & 0                     & 38                    & 38                    & 0                     & 8                                                                                             & 4.75                       & 4.75                         & 10                          & 10                           \\
pe-rcv-ifc                                 & 46                         & 62                    & 23                    & 20                    & 96                    & 43                    & 37                    & 13                    & 39                    & 57                    & 2                     & 2                                                                                             & 19.00                      & 28.50                        & 21                          & 13                           \\
pulse                                      & 12                         & 12                    & 7                     & 6                     & 2                     & 12                    & 12                    & 0                     & 7                     & 10                    & 0                     & 2                                                                                             & 3.50                       & 5.00                         & 3                           & 6                            \\
rcv-setup                                  & 14                         & 17                    & 10                    & 10                    & 5                     & 14                    & 14                    & 4                     & 12                    & 14                    & 0                     & 2                                                                                             & 6.00                       & 7.00                         & 9                           & 10                           \\
vme\_read                                  & 255                        & 668                   & 38                    & 29                    & 18                    & 41                    & 32                    & 2                     & 50                    & 67                    & 1                     & 9                                                                                             & 6.11                       & 7.67                         & 12                          & 13                           \\
vme\_write                                 & 821                        & 2907                  & 46                    & 33                    & 31                    & 49                    & 36                    & 6                     & 57                    & 74                    & 1                     & 11                                                                                            & 6.18                       & 7.36                         & 9                           & 11                           \\ \hline
\end{tabular}
}
	\caption{Number of places (P), transitions (T) and arc crossings (C) of the original transition systems vs. derived Petri nets vs. product of SMs and SM details.
	}
	\label{tab:comparison}
\vspace{-0.5cm}

\end{table*}

We implemented the procedure described in Sec.~\ref{section:decomposition} and performed experiments on an Intel core running at 2.80GHz with 16GB of RAM. Our software is written in C++ and uses \textit{PBLib}~\cite{pblib.sat2015} for the resolution of SAT. The resolution of the MIS problem is performed by the \textit{NetworkX} library~\cite{team2014networkx}. 
For our tests, we used two sets of benchmarks: the first set (the same as in~\cite{regions}), with smaller transition systems is listed in the first rows of Table~\ref{tab:reg_gen_time} and denoted as ``Small-sized set"; the second one containing large transition systems is listed in the second part of Table~\ref{tab:reg_gen_time}, 
denoted as ``Large-sized set".

Table~\ref{tab:reg_gen_time} shows the absolute and relative runtimes of the steps of the flow: region generation, decomposition into SMs, irredundancy, place merging. The generation of minimal regions is the dominating operation taking more than 60\% of the overall time spent; it is exponential in the number of events and with the increase of the input dimensions it becomes the bottleneck shadowing the remaining computations. 
However it is still possible to decompose quite large transition systems with about $10^6$ states and $3\cdot 10^6$ transitions.

Table~\ref{tab:comparison} compares the states and transitions of transition systems vs. the places/transitions/crossing arcs of the Petri nets derived by Petrify~\cite{cortadella1997petrify} (columns under PN), and vs. our product of state machines for the first benchmark set. The number of crossing arcs is reported by the {\em dot} algorithm of {\em graphviz}~\cite{crossings} and can be considered as a metric of structural simplicity of the model (i.e., fewer crossings implies a simpler structure). Our results from synchronized state machines have similar sizes compared to those from Petri nets, but they have fewer crossings, which is a significant advantage in supporting a visual representation for ``large systems". Therefore the plots, in a two-dimensional graphical representation of synchronizing SMs, are substantially more \textit{readable} than the ones of Petri nets: see the inputs \textit{intel\_edge} and \textit{pe-rcv-ifc} witnessing that peaks of edge crossings are avoided. 
The example \textit{master-read} instead is an impressive case of how our decomposition tames the state explosion of the original transition system derived from a highly concurrent environment, since from 8932 states we go down to 8 SMs with an average number of 5 states each.

\begin{table*}
	\centering
	\resizebox{\textwidth}{!}{
	\begin{tabular}{|l|S[table-format=4.2]rS[table-format=2.2]|rrr|rrr|r|}
	\hline
		{\bf Input}          & \multicolumn{1}{c}{\begin{tabular}[c]{@{}c@{}}{ \bf Decomposition} \\ { \bf{ [}s{]}}\end{tabular}} & \multicolumn{1}{c}{\begin{tabular}[c]{@{}c@{}}{\bf Greedy} \\ { \bf{ [}s{]}}\end{tabular}} & \multicolumn{1}{c|}{\begin{tabular}[c]{@{}c@{}}{\bf Merge} \\ { \bf{ [}s{]}}\end{tabular}} & \multicolumn{1}{c}{\begin{tabular}[c]{@{}c@{}}{ \bf States}\\ {\bf after} \\ {\bf decomposition}\end{tabular}} & \multicolumn{1}{c}{\begin{tabular}[c]{@{}c@{}}{\bf States}\\ {\bf after} \\ {\bf greedy}\end{tabular}} & \multicolumn{1}{c|}{\begin{tabular}[c]{@{}c@{}}{\bf States}\\ {\bf after} \\ {\bf merge}\end{tabular}} & \multicolumn{1}{c}{\begin{tabular}[c]{@{}c@{}}{\bf Trans.}\\ {\bf after} \\ {\bf decomposition}\end{tabular}} & \multicolumn{1}{c}{\begin{tabular}[c]{@{}c@{}}{\bf Trans.}\\ {\bf after }\\ {\bf greedy}\end{tabular}} & \multicolumn{1}{c|}{\begin{tabular}[c]{@{}c@{}}{\bf Trans.}\\ {\bf after} \\ {\bf merge}\end{tabular}} & \multicolumn{1}{c|}{\begin{tabular}[c]{@{}c@{}}{\bf Number}\\ {\bf of regions }\\{\bf TS}\end{tabular}} \\ \hline
		alloc-outbound & 14.01             & 0.0009   & 0.06   & 42                                                                                        & 21                                                                                 & 17                                                                                & 50                                                                                        & 25                                                                                 & 21                                                                                & 15                                                                              \\
		clock          & 0.55            & 0.0003   & 0.02  & 18                                                                                        & 14                                                                                 & 11                                                                                & 22                                                                                        & 18                                                                                 & 15                                                                                & 11                                                                              \\
		fair\_arb      & 0.58            & 0.0007   & 0.03   & 24                                                                                        & 12                                                                                 & 12                                                                                & 36                                                                                        & 18                                                                                 & 18                                                                                & 11                                                                              \\
		future         & 1881.00              & 0.0012   & 0.10   & 41                                                                                        & 29                                                                                 & 22                                                                                & 43                                                                                        & 30                                                                                 & 23                                                                                & 19                                                                              \\
		intel\_div3    & 0.21            & 0.0001   & 0.01  & 12                                                                                        & 12                                                                                 & 10                                                                                & 13                                                                                        & 13                                                                                 & 11                                                                                & 8                                                                               \\
		lin\_edac93    & 0.33            & 0.0002   & 0.01   & 13                                                                                        & 13                                                                                 & 13                                                                                & 14                                                                                        & 14                                                                                 & 14                                                                                & 10                                                                              \\
		pulse          & 0.20            & 0.0000   & 0.01  & 7                                                                                         & 7                                                                                  & 7                                                                                 & 10                                                                                        & 10                                                                                 & 10                                                                                & 7                                                                               \\
		rcv-setup      & 0.36            & 0.0002   & 0.04   & 18                                                                                        & 18                                                                                 & 12                                                                                & 22                                                                                        & 22                                                                                 & 14                                                                                & 11  \\ \hline                                                                         
	\end{tabular}}
	\caption{CPU time and results of the exact decomposition algorithm} \label{tab:exact_decomposition}
\end{table*}

We implemented also an exact search of all SMs derived from the original TS, to gauge our heuristics, when it is possible to find an exact solution.
We compare the times taken by the exact and heuristic SM generation steps: the exponential behaviour of the exact algorithm makes it hardly affordable for about 15 regions and run out of 16GB of memory for more than 20 regions (Table~\ref{tab:exact_decomposition}).
Instead, the approximate algorithms presented in Sec.~\ref{section:decomposition} can handle very large transition systems. Even though the result is not guaranteed to be a minimum one, the irredundancy procedure guarantees a form of minimality, yielding a compact representation that avoids state explosion and exhibits concurrency explicitly. 

\footnotetext{PN* is a representation of the PN after splitting disconnected ERs, thus producing multiple labels (transitions) for the same event. This results in a PN with more  transitions and a simpler structure.}

\section{Conclusions} \label{section:conclusions}

In this paper we described a new method for the decomposition of transition systems. Our experimental results demonstrate that the decomposition algorithm can be run on transition systems with up to one million states, therefore, it is suitable to handle real cases. Since the generation of minimal regions is currently a computational bottleneck, future work will address this limitation, while it will leverage the improvements in efficiency of last-generation MIS and SAT solvers.

As future work, we want to apply this decomposition paradigm to process mining. Rather than synthesizing intricate ``spaghetti'' Petri nets from logs, we aim at distilling loosely coupled concurrent threads (SMs) that can be easily visualized, analyzed and optimized individually, while preserving the synchronization with the other threads. Optionally, a new Petri net can be obtained by composing back the optimized threads and imposing some structural constraints, e.g., to be a Free-Choice Petri net, thus providing a tight approximation of the original behavior with a simpler structure.

\bibliography{bibliography}
\bibliographystyle{style/IEEEtran}

\appendix
\section{Appendix}

\subsection{Proof of theorem \ref{thm:equivalence-n}} \label{section:equivalence_proof}

The equivalence between an ECTS and the derived set of SMs is proved by defining a bisimulation between
the original TS, defined as \mbox{$\textit{TS} = (S, E, T, s_{0})$}, and the synchronous product
of the reachability graphs of the derived state machines 
\mbox{$\rgraph{\sm_1} || \rgraph{\sm_2} || \dots || \rgraph{\sm_n}$},
denoted by \mbox{$||_{i=1, \dots, n} \rgraph{\sm_i} = (S_{||},$ $E, T_{||},$} $s_{0,||})$.
Notice that each \mbox{$RG(SM_i) = (R_i,E_i,T_i,r_{0,i})$}, with
$T_i \subseteq R_i \times E_i \times R_i$,
is defined on a subset $E_i$ of events of $\textit{TS}$, its states $r_i$ correspond to regions of the states of $\textit{TS}$, and the initial state is a region $r_{0,i}$ containing the initial state of $\textit{TS}$.
To prove the existence of a bisimulation, we require that the union of $\rgraph{\sm_i}$ satisfies ECTS, where event-effectiveness guarantees that \mbox{$\cup E_i = E$}, and excitation-closure guarantees that the two transition systems simulate each other, i.e., the transition relations allow to match each other's moves.

\begin{proof}

We define the binary relation $B$ as follows: 
\begin{equation}
    (s_j, (r_{j,1}, r_{j,2}, \dots, r_{j,n})) \in B \iff s_j \in \bigcap_{i=1}^n r_{j,i},
\end{equation}
where \mbox{$s_j \in S$} and \mbox{$r_{j,i}\in R_i$}, for $i\in\{1,\ldots,n\}$.

Notice that writing $(s_j, (r_{j,1}, r_{j,2}, \dots, r_{j,n})) \in B \iff \{s_j\} = \bigcap_{i=1}^n r_{j,i}$ would be wrong, because the intersection of regions could have two or more bisimilar (i.e., behaviourally equivalent) states, as in the TS $\arc{s_0}{a}{}\arc{s_1}{b}{}\arc{s_2}{a}{}\arc{s_3}{b}{s_0}$.

A region $r_{j,i}$ may appear in two or more sets of regions $R_i$.
Now we prove that $B$ is a bisimulation in three steps:
\begin{enumerate}
    \item $(s_0,(r_{0,1},r_{0,2}, \dots,r_{0,n})) \in B$.
    \item If \mbox{$(s_j, (r_{j,1}, r_{j,2}, \dots, r_{j,n})) \in B$} and 
$(s_j,e,s_k) \in T$, then there is $(r_{k,1}, r_{k,2},$ $\dots, r_{k,n}) \in S_{||}$ such that $((r_{j,1}, r_{j,2}, \dots, r_{j,n}),e,(r_{k,1}, r_{k,2}, \dots, r_{k,n})) \in T_{||}$ and
$(s_k,(r_{k,1}, r_{k,2}, \dots, r_{k,n})) \in B$.
\item  If $(s_j, (r_{j,1}, r_{j,2}, \dots, r_{j,n})) \in B$ and
\mbox{$((r_{j,1}, r_{j,2}, \dots, r_{j,n}), e,(r_{k,1}, r_{k,2}, \dots, r_{k,n})) \in T_{||}$}, then there is $s_k \in S$ such that $(s_j,e,s_k) \in T$ and \mbox{$(s_k,(r_{k,1}, r_{k,2}, \dots, r_{k,n})) \in B$.}
\end{enumerate}

\medskip \noindent
Let us now proceed with the proofs.
\begin{enumerate}
\item 
Since $\textit{TS}$ has a unique initial state $s_0$, each state machine $\sm_i$ has exactly one initial region $r_{0,i}$ such that $s_0 \in r_{0,i}$ because all the regions of an SM are disjoint. Therefore, $s_0 \in \bigcap_{i=1}^n r_{0, i}$ and we have that \mbox{$(s_0,(r_{0,1},r_{0,2}, \dots,r_{0,n})) \in B$.}

\item Since $(s_j,e,s_k) \in T$ and 
$(s_j, (r_{j,1}, r_{j,2}, \dots, r_{j,n})) \in B$, then $s_j\in\bigcap_{i=1}^n r_{j,i}$.
Now we will prove that there is $s_k$ such that \mbox{$s_k\in\bigcap_{i=1}^n r_{k,i}$}, so that we can have $(s_k,(r_{k,1}, r_{k,2},$ $\dots, r_{k,n})) \in B$.

Since $e$ is enabled in $s_j$, none of the $r_{j,i}$'s can be a post-region of $e$. If one $r_{j,i}$ in $\{r_{j,1},\ldots,r_{j,n}\}$ would be a post-region,
then $s_j\not\in \bigcap_{i=1}^n r_{j,i}$.
Therefore, the following three cases can be distinguished for each 
\mbox{$r_{j,i} \in \{r_{j,1}, r_{j,2}, \dots, r_{j,n}\}$}:

\begin{itemize}
    \item $e$ is not an event of $\sm_i$. Thus, $r_{k,i}= r_{j,i}$. 
    \item $e$ is an event of $\sm_i$ and $r_{j,i}$ is a no-cross region for $e$. Thus, $r_{k,i}= r_{j,i}$.
    \item $e$ is an event of $\sm_i$ and $r_{j,i}$ is a pre-region of $e$. Thus, $r_{k,i}\neq r_{j,i}$ is a post-region of $e$.
\end{itemize}

For the first and second cases, $\sm_i$ will not change state and $\textit{TS}$ will not change region when moving from $s_j$ to $s_k$. Therefore, \mbox{$s_k\in r_{j,i} = r_{k,i}$}.

For the third case, $e$ will exit $r_{j,i}$ and will enter $r_{k,i}$ in $\textit{TS}$, which means that \mbox{$s_k\in r_{k,i}$}. Therefore,
\mbox{$((r_{j,1}, r_{j,2}, \dots, r_{j,n}),e,(r_{k,1}, r_{k,2}, \dots, r_{k,n})) \in T_{||}$}.

For all cases we have that $s_k\in r_{k,i}$ and therefore $s_k\in\bigcap_{i=1}^n r_{k,i}$.

\item 
Since $(s_j, (r_{j,1}, r_{j,2}, \dots, r_{j,n})) \in B$, it holds that 
\mbox{$s_j \in \bigcap_{i=1}^n r_{j,i}$}. Given the existence
of the transition \mbox{$((r_{j,1}, r_{j,2}, \dots, r_{j,n}), e,(r_{k,1}, r_{k,2}, \dots, r_{k,n}))$}, and knowing that the EC property holds, we know that
\mbox{$s_j \in \bigcap_{i=1}^{n} r_{j,i} \subseteq \es{e}$}.
The latter inequality holds because by Th.~\ref{thm:regions_to_SM} we have 1) $\forall i, i =1, \dots, n$, label $e$ appears once in $\sm_i$ or it does not appear, and 2) $\forall i, i =1, \dots, n$, if label $e$ appears in $\sm_i$ then $r_{j,i} \in (\preregion{e} \cap R)$, by which $\bigcup_{i=1}^{n} r_{j,i} \supseteq \bigcup_{r \in (\preregion{e} \cap R)} r$ and so by complementation $\bigcap_{i=1}^{n} r_{j,i} \subseteq \bigcap_{r \in (\preregion{e} \cap R)} r = \es{e}$.

Therefore, there is $s_k$ such that \mbox{$(s_j,e,s_k) \in T$}. We can also see that $s_k \in \bigcap_{i=1}^n r_{k,i}$, using the same reasoning as in step 2, since all the pre-regions $r_{j,i}$ of $e$ in $\{r_{j,1},\ldots,r_{j,n}\}$ are exited by entering $r_{k,i}$, whereas the no-crossing regions remain the same. We can then conclude that 
\mbox{$(s_k,(r_{k,1}, r_{k,2}, \dots, r_{k,n})) \in B$.}

\end{enumerate}
\vspace{-0.5cm}
\end{proof}

\end{document}